\documentclass[acmsmall,screen,nonacm]{acmart}
\usepackage[utf8]{inputenc}
\usepackage[T1]{fontenc}
\usepackage{anyfontsize}

\usepackage{amsmath, amsfonts, amsthm, mathtools}
\usepackage{dsfont}
\usepackage{mathrsfs}
\usepackage{stmaryrd}

\usepackage[ruled,linesnumbered,noend]{algorithm2e}

\usepackage{booktabs}
\usepackage{multirow}
\usepackage{tabularx}
\usepackage{graphicx}
\usepackage[export]{adjustbox}
\usepackage{float}

\usepackage{microtype}
\usepackage{xcolor}
\usepackage{nicefrac}
\usepackage{enumitem}
\usepackage{xspace}
\usepackage{stackengine}

\usepackage[framemethod=TikZ]{mdframed}
\usepackage[most]{tcolorbox}

\usepackage{url}
\usepackage{hyperref}
\usepackage[capitalize,noabbrev]{cleveref}

\usepackage{natbib}

\input{macro}

\setcopyright{acmlicensed}
\copyrightyear{2018}
\acmYear{2018}
\acmDOI{XXXXXXX.XXXXXXX}
\acmConference[Conference acronym 'XX]{Make sure to enter the correct
  conference title from your rights confirmation email}{June 03--05,
  2018}{Woodstock, NY}
\acmISBN{978-1-4503-XXXX-X/2018/06}

\begin{document}

\title{Fast and Optimal Differentially Private Frequent-Substring Mining}


\author{Peaker Guo}
\affiliation{%
  \institution{Institute of Science Tokyo}
  \country{Japan}}
\email{peakerguo@gmail.com}

\author{Rayne Holland}
\affiliation{%
  \institution{CSIRO's Data61}
  \country{Australia}
}
\email{rayne.holland@data61.csiro.au}

\author{Hao Wu}
\affiliation{%
 \institution{University of Waterloo}
 \country{Canada}
}
\email{hao.wu1@uwaterloo.ca}






\begin{abstract}

Given a dataset of $n$ user-contributed strings, each of length at most $\length$, a key problem is how to identify all frequent substrings while preserving each user’s privacy.
Recent work by Bernardini et al. (PODS'25) introduced a $\varepsilon$-differentially private algorithm achieving near-optimal error, but at the prohibitive cost of $O(n^2\length^4)$ space and processing time.
In this work, we present a new $\varepsilon$-differentially private algorithm that retains the same near-optimal error guarantees while reducing space complexity to $O(n \length + |\Sigma| )$ and time complexity to $O(n  \length  \log |\Sigma| + |\Sigma| )$, for input alphabet $\Sigma$. 
Our approach builds on a top-down exploration of candidate substrings but introduces two new innovations: (i) a refined candidate-generation strategy that leverages the structural properties of frequent prefixes and suffixes, and (ii) pruning of the search space guided by frequency relations.  
These techniques eliminate the quadratic blow-ups inherent in prior work, enabling scalable frequent substring mining under differential privacy.

\end{abstract}


\begin{CCSXML}
<ccs2012>
   <concept>
       <concept_id>10002978.10002991.10002995</concept_id>
       <concept_desc>Security and privacy~Privacy-preserving protocols</concept_desc>
       <concept_significance>500</concept_significance>
       </concept>
   <concept>
       <concept_id>10003752.10003809</concept_id>
       <concept_desc>Theory of computation~Design and analysis of algorithms</concept_desc>
       <concept_significance>500</concept_significance>
       </concept>
 </ccs2012>
\end{CCSXML}

\ccsdesc[500]{Security and privacy~Privacy-preserving protocols}
\ccsdesc[500]{Theory of computation~Design and analysis of algorithms}


\keywords{Differential privacy, frequent-substring mining, binary tree mechanism, suffix trees
}


\maketitle
\section{Introduction}
\label{sec:introduction}

Modern data-driven systems operate at corpus scale.
Language models trained on user-generated text learn frequent, well-formed expressions and suppress low probability combinations.
Many natural language processing pipelines first mine frequent substrings from private corpora, then use them for next-word prediction, auto-completion, or response suggestion \citep{DBLP:conf/kdd/ChenLBCZLTWDCSW19, DBLP:conf/kdd/KannanKRKTMCLGY16, DBLP:conf/naacl/DebBS19, DBLP:conf/nips/HuLLC14}.
Despite practical success, this practice raises substantial privacy risks: a single contributor can inject rare phrases revealing medical conditions, locations or other personal details that models may memorize and later surface.

Similar concerns arise beyond natural language, for example, public transit sequences~\citep{DBLP:conf/kdd/ChenFDS12} and genomic strings~\citep{DBLP:conf/securecomm/KhatriDH19}, where frequent substrings capture, respectively, routes or motifs.
Because these corpora are linkable to individuals, naive substring mining can expose sensitive attributes.
Thus, we seek to discover global patterns for planning and analysis without revealing whether any particular user follows a specific route or carries a particular genetic variant.

\emph{Differential privacy} offers a principled mechanism for controlling this risk.
It guarantees that substrings appearing sufficiently often across the corpus can be learned and released, while the presence or absence of any single user has only a negligible effect on the output.
This naturally leads to the problem of \emph{differentially private frequent substring mining}, where the input is a database of sensitive documents and the objective is to release a set of frequent substrings while preserving strong privacy for each user.

Due to the importance of the problem, it has received a long line of research~\citep{DBLP:conf/cikm/BonomiX13,DBLP:conf/sigmod/ZhangXX16,DBLP:conf/kdd/ChenFDS12,app12042131,DBLP:conf/securecomm/KhatriDH19,DBLP:conf/nips/KimGKY21,DBLP:conf/ccs/ChenAC12,bernardini2025differentially}.
In \citeyear{bernardini2025differentially}, \citet{bernardini2025differentially} provided the first systematic theoretical study of the problem.
They presented an algorithm with rigorous additive-error guarantees and proved a matching lower bound, showing that their result is asymptotically optimal up to logarithmic factors.
A key innovation in their work is a top-down approach that identifies candidate frequent substrings over geometrically increasing substring lengths. 
This reduces both the sensitivity of the problem on similar datasets and the privacy cost incurred due to composition.

However, while theoretically significant, their approach is computationally expensive.
On a dataset of $n$ strings of length $\length$, their method requires ${O(n^{2} \length^{4})}$ processing time and space, which is infeasible at realistic scales.
For example, the Reddit dataset\footnote{A standard benchmark in frequent-substring mining.} used by Kim et al.~\citep{DBLP:conf/nips/KimGKY21} has $n \approx 10^6$ users each contributing strings with a combined length $\length \ge 3000$.
This leads to the following research question:
\begin{quote}
    \textit{\underline{Research Question}: Can we reduce the processing time and space to near-linear, while retaining asymptotically optimal error guarantees?}
\end{quote}

\subsubsection*{\bf Our Contributions.}
We answer this question in the affirmative, as summarized below.

\begin{theorem}[Informal Version of Theorem~\ref{thm:main}]\label{thm:informal-main}
    Let $\DataSet$ be a dataset of $n$ user-contributed strings over an alphabet $\Sigma$, each of length at most $\length \in \N_+$.  
    Let $\eps > 0$ and $\beta \in (0,1)$.  
    There exists an $\eps$-differentially private algorithm that, with probability at least $1 - \beta$, outputs all substrings whose frequencies in $\DataSet$ are at least $\freqUpperBound$ for some $\freqUpperBound \in \tilde{O}\paren{\tfrac{\length}{\eps}}$, 
    while using $O(n  \length  \log \card{\Sigma} + |\Sigma|)$ time and $O(n  \length + |\Sigma|)$ space.
\end{theorem}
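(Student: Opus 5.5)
We propose to prove the theorem by building an explicit algorithm and then checking privacy, accuracy and efficiency one at a time. The algorithm follows the top-down scheme of \citet{bernardini2025differentially}. For each length $\ell = 1, 2, 4, \dots, 2^{\lfloor \log \length\rfloor}$ it maintains a candidate set $C_\ell$ of substrings of length $\ell$, with the goal that every substring of length $\ell$ and frequency at least $\freqUpperBound$ lies in $C_\ell$.

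\textbf{Candidate generation.} We take $C_1 = \Sigma$. For $2\ell$ we do not use all pairs $C_\ell\times C_\ell$, which is where the quadratic blow-up comes from. Instead we only generate strings $xy$ that actually occur in the data, with $x$ a frequent prefix and $y$ a frequent suffix. For non-powers-of-two we use two overlapping length-$2^k$ windows. Concretely, we build the generalized suffix tree of $\DataSet$ in $O(n\length\log|\Sigma|)$ time and $O(n\length)$ space. Candidates then become loci in this tree, so $\sum_\ell |C_\ell| = O(n\length)$. We prune any candidate whose ancestor (prefix) or suffix-link target (suffix) was declared infrequent; by anti-monotonicity, frequency can only drop when a string is extended.

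\textbf{Privacy and accuracy.} Each round releases noisy counts of the candidates. We count per-user document frequency, so each user changes each count by at most $1$. Within a round a user contributes to at most $O(\length)$ candidates: each suffix position yields at most one locus per length, and consecutive lengths are organized via a binary-tree mechanism along suffix-tree paths. So the $\ell_1$-sensitivity per round is $O(\length)$, or $O(\length\log\length)$ summed over the $O(\log\length)$ rounds. Laplace noise with scale $\tilde O(\length/\eps)$ therefore gives $\eps$-DP by basic composition. A crucial point is that the candidate sets themselves must be data-independent beyond what is already released. To ensure this, we define $C_{2\ell}$ only from the released $C_\ell$ together with occurrence in the data, and we charge the counts of all potentially nonzero strings. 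Noise on strings of count $0$ is handled by sampling noise only for nonzero counts and simulating the tail of the zero-count strings, as in sparse-vector-style or thresholding arguments.

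For accuracy, a union bound over $O(n\length)$ noisy values plus threshold-level tail bounds shows that all noise is $\tilde O(\length/\eps)$ with probability $1-\beta$. Induction on $\ell$ then shows that every string with frequency at least $\freqUpperBound$ survives every round.

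\textbf{Main obstacle.} The difficult step is reconciling the data-dependent, near-linear candidate generation with the sensitivity bound. A neighboring dataset can create or destroy suffix-tree loci, which changes which counts are released. I would handle this by showing that released outputs depend only on noisy counts exceeding a threshold. Since zero-count strings exceed the threshold only with probability absorbed into $\beta$, this gives pure $\eps$-DP: the threshold is set so that the distribution is exactly that of a mechanism releasing all counts, post-processed. The time bound then follows from the suffix-tree construction plus constant work per locus, and the $|\Sigma|$ term comes from initialization.
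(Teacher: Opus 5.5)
\textbf{There is a genuine gap, and it sits exactly where you flag the main obstacle; the fix you suggest does not work.}

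You generate length-$2k$ candidates from strings that \emph{actually occur in the data}, so the set of strings whose counts are tested depends on the private input. Take a string $P$ that occurs only in the changed record $S$. Under $D$ it is a candidate, and it is released whenever its noisy count clears the threshold, which is a positive-probability event. Under $D'$ it is never examined, so it is released with probability $0$. Hence no bound $\Pr[\mathcal{M}(D)\in E]\le e^{\varepsilon}\Pr[\mathcal{M}(D')\in E]$ can hold. What you describe is the stability-based histogram, which is only $(\varepsilon,\delta)$-DP. The fact that the bad event has probability at most $\beta$ does not help: $\beta$ is a utility failure probability, and pure DP has no additive slack to absorb it. Simulating the tail of zero-count strings does give pure DP, but only when the universe of tested strings is fixed by already-released information. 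That is incompatible with ``occurrence in the data'', and you never specify such a universe.

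Supplying that public universe is the core of the paper's argument. Lemma~\ref{lem:frequent_sub} shows that every frequent string of length $k+t$ has its length-$k$ prefix in $C_k$, and its last $t$ symbols form a suffix of some element of $C_k$. So the search can be confined to the trees $s\circ T_k$, $s\in C_k$, where $T_k$ is the character-aligned suffix trie of the \emph{released} set $C_k$. The data enter only through noisy counts. To keep this public search cheap, the paper first encodes $\Sigma$ in binary with a \texttt{\$} terminator, so every visited node has at most two public children. Working over $\Sigma$ directly, a data-independent extension step would have to test up to $|\Sigma|$ children per node. Your data-driven generation was hiding exactly this cost.

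\textbf{The second gap is the sensitivity bound across the many lengths in one round.} ``One locus per suffix position per length'' gives $\ell_1$-sensitivity $O(\ell)$ \emph{per length}. Over the $k$ lengths between $k$ and $2k$ that is $O(\ell k)$, i.e.\ error $\tilde O(\ell^2/\varepsilon)$. A binary-tree mechanism along paths helps only if two conditions hold:
\begin{itemize}
\item[(a)] the paths come from a public tree; paths of the data's suffix tree are themselves private;
\item[(b)] each user's occurrences touch few path heads. The privacy cost of per-path mechanisms is $\sum_{\text{paths}} 2\max\{f_S(\mathrm{head}),f_{S'}(\mathrm{head})\}$ (Lemma~\ref{lem:heavy_path_sens}, using monotonicity of counts along a path), so an arbitrary path family buys nothing.
\end{itemize}
The missing idea is the heavy-light decomposition of the public trie $T_k$, reused for every $s\circ T_k$. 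Each of a user's at most $\ell$ aligned start positions traces one root-to-node path. That path crosses $O(\log(n\ell'))$ heavy paths, with $\ell'=\ell(\lceil\log|\Sigma|\rceil+1)$. This gives total sensitivity $O(\ell\log(n\ell'))$ per phase (Lemma~\ref{lem:affected-mass}).

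\textbf{Two smaller issues.}
\begin{itemize}
\item Switching to document frequency changes the problem, since frequency here is the total number of occurrences. A string occurring up to $\ell$ times in each of roughly $\tau/\ell$ records would be missed. The switch also buys nothing, because occurrence counts at a single length already have $\ell_1$-sensitivity at most $2\ell$.
\item $\sum_m |C_m| = O(n\ell)$ does not follow from candidates being loci; there are $\Theta(n\ell^2)$ distinct substrings in general. The paper instead gets $|C_k|\le n/\log\ell'$ from the exclusion threshold $\ge \ell\log\ell'$ plus an explicit termination guard. That bound is what makes each phase cost $O(nk)$ and keeps $\log|T_k|$ bounded.
\end{itemize}
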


A comparison with prior work \cite{bernardini2025differentially} is provided in Table~\ref{tab:comparison}.
There exists a trade-off between our approaches.
Our solution incurs a higher additive error in the frequency estimates, which translates to  a higher threshold for frequent substrings. 
The main difference is that we process the strings according to a binary alphabet, converting from the input alphabet $\Sigma$ if necessary.
This leads to ``longer'' strings of length $\lengthprime \doteq \length (\lceil \log |\Sigma| \rceil+1)$ and, consequently, a $ \log |\Sigma|$ factor emerges in the additive error.

This dependence on the alphabet size $|\Sigma|$ is not a practical limitation. 
In many applications, the alphabet is inherently small: for genomic data, for example, we typically have $\Sigma = \{ \texttt{A}, \texttt{C}, \texttt{G}, \texttt{T} \}$. 
Consequently, logarithmic factors in $|\Sigma|$ are negligible compared with the dependence on $n$ and $\length$, 
and do not affect the scalability of our algorithm in many real-world settings.

Despite the additional factors in the error bounds, both are $\tilde{O}(\length/\varepsilon)$ and optimal up to these polylogarithmic factors.
In addition, our reduction in performance costs is substantial and shifts the problem to a realistic scale. 
For example, our performance costs are dominated by the construction cost of a sparse suffix tree of the binary alphabet ``converted'' database, a cost admissible in practice.
Remarkably, assuming the alphabet size is $O(n\length)$, our result requires memory proportional to the size of the dataset. 

Additional discussion of related work that does not provide a theoretical treatment of the problem is relegated to Appendix~\ref{sec:related-work}.

\begin{table}[t]
    \centering
    \begin{tabular}{c|c|c|c}
        \specialrule{0.9pt}{0pt}{0pt} 
        Reference & Frequency thresholds & Time & Space \\
        \specialrule{0.8pt}{0pt}{0pt} 
         \citet{bernardini2025differentially} 
         & $
         \Theta \left( 
         \frac{\length \ln \length}{\eps}
         \left( \ln^2 \frac{n\length}{\beta} + \ln \card{\Sigma} \right)
         \right)
         $ 
         & $O(n^2 \length^4 + |\Sigma|)$ & $O(n^2 \length^4)$\\
         Theorem~\ref{thm:main} 
         & 
         $\Theta \left( 
         \frac{\length \ln^2 \lengthprime}{\eps}  \left(\ln^2 \frac{n\lengthprime }{\beta} 
         \right) \right)$
         & $O(n \lengthprime + |\Sigma|)$  & $O(n \length + |\Sigma|)$ \\
         \specialrule{0.9pt}{0pt}{0pt} 
    \end{tabular}
    \caption{
    Comparison between our result and prior work by \citet{bernardini2025differentially}.
    $\lengthprime = \length \cdot (\lceil \log |\Sigma|\rceil+1) $.
    }
    \vspace{-5mm}
    \label{tab:comparison}
\end{table}

\subsubsection*{\bf Algorithm Overview}

The main bottleneck from prior work arises from \emph{how} candidate substrings are generated.
For example, given the set $\Candidates{k}$ of frequent substrings of length $k$, \citet{bernardini2025differentially} test the frequency of {every} concatenated pair $s_1 \cdot s_2$, for $s_1,s_2 \in \Candidates{k}$, to determine frequent substrings of length $2k$.
This leads to $|\Candidates{k}|^2$ candidates per phase and induces a quadratic cost in both runtime and memory.

To mitigate the combinatorial blow-up, we introduce a novel top-down algorithm that strategically prunes the search space.
Our starting point is the observation that, conditioned on having correctly identified the set $\Candidates{k}$ of frequent substrings of length $k$, 
every longer frequent substring of length $k+t$, for $t \in \{1,\ldots, k\}$, \emph{must extend some element of $\Candidates{k}$ along a suffix that appears in $\Candidates{k}$}.

Leveraging this structure, we build a single compact trie $T_k$ from the suffixes of $\Candidates{k}$.
Subsequently, we explore candidates that arise from traversing the concatenated trees $s \circ T_k$, for $s \in \Candidates{k}$, where $s$ is viewed as a single-path tree whose last node is connected to the root of $T_k$.
During this top-down exploration, we {prune} entire subtrees with (noisy) frequency estimates below a predefined threshold.
As truly frequent substrings must follow from existing frequent prefixes, this pruning removes large portions of the search space without impeding correctness.
As we reuse $T_k$ over each traversal, the result is that the time spent in each phase is proportional to the combined length of the frequent substrings that belong to the phase.
Thus, the total search cost is proportional to the combined length of all frequent substrings.

The full algorithm is $\varepsilon$-differentially private and relies on noisy frequency estimates with additive-error $\Tilde{O}(\tfrac{\length}{\eps})$, which is optimal up to polylogarithmic factors.

\subsubsection*{\bf Organization}
Section~\ref{sec:problem-setup} introduces the necessary definitions and formalizes the differentially private frequent substring mining problem.
Section~\ref{sec:preliminary} reviews the technical preliminaries on differential privacy and the string data structures used by our algorithm.
In Sections~\ref{sec:algorithm} and~\ref{sec:analysis}, we present our algorithm together with an analysis of its privacy, utility, and asymptotic cost. 
Finally, Section~\ref{sec:conclusion} offers some concluding remarks.

\section{Problem Setup}\label{sec:problem-setup}

In this section, we formally define the problem of differentially private frequent substring mining.

\subsection{Notation}
We use $\N$, $\Z$, and $\R$ to denote the sets of \emph{natural numbers}, \emph{integers}, and \emph{real numbers}, respectively, and write $\N_+$ for the positive naturals, $\R_{\ge 0}$ for the nonnegative reals, and $\R_+$ for the positive reals.
For each $n \in \N$, 
we define 
$[n] \doteq \set{1, \dots, n }$.

Let $\DataSet \doteq \{\Document{1}, \ldots, \Document{n}\}$ 
be a dataset of $n$ strings drawn from $\Sigma$, 
where each string has length at most $\length \in \N_+$ 
(i.e., $\Document{i} \in \Sigma^{1:\length}$). 
For each string $P \in \Sigma^{1:\length}$, we define its frequency in the dataset $\DataSet$ as
\(
    \freq{\DataSet}{P}
    \;\doteq\;
    \sum_{i=1}^n \freq{\Document{i}}{P},
\)
where $\freq{\Document{i}}{P}$ denotes the number of occurrences of $P$ in $\Document{i}$.
We view the frequencies of all strings in $\Sigma^{1:\length}$ as the vector
\(
    \freq{\DataSet}
    \;\doteq\;
    \bigl( \freq{\DataSet}{P} \bigr)_{P \in \Sigma^{1:\length}}.
\)

\subsection{Privacy Guarantee}

In our setting, the dataset $\DataSet$ is contributed by $n$ users, where the string $\Document{i}$ belongs to the $i$-th user.  
To protect user privacy, the released frequent strings must be insensitive to the content of any individual string, so that modifying one user's string does not significantly affect the output. 

This requirement is formalized through the standard notion of differential privacy (DP). 
We first recall the general definition of DP for arbitrary randomized algorithms, and later instantiate it for the problem considered in this work.

\begin{definition}[$\eps$-Indistinguishability]
    \label{def:indistinguishability}
    Let $\eps \in \R_{\ge 0}$, and let $(\cZ, \cF)$ be a measurable space. 
    Two probability measures $\mu$ and $\nu$ on this space are 
    \emph{$\eps$-indistinguishable} if
    \begin{equation}
        e^{-\eps} \cdot \mu(E) 
        \;\le\;
        \nu(E)
        \;\le\;
        e^{\eps} \cdot \mu(E),
        \qquad
        \forall E \in \cF.
    \end{equation}
    Likewise, two random variables are $\eps$-indistinguishable if their induced distributions are.
\end{definition}

\begin{definition}[$\eps$-Differential Privacy~\citep{DworkMNS06}]
    \label{def:DifferentialPrivacy}
    Let $\cM : \cY \to \cZ$ be a randomized algorithm and let $\sim$ be a symmetric neighboring relation on $\cY$.
    For parameters $\eps \in \R_{\ge 0}$,  
    we say that $\cM$ is \emph{$\eps$-differentially private} if for every neighboring pair $y \sim y'$,  
    the distributions of $\cM(y)$ and $\cM(y')$ are $\eps$-indistinguishable.
\end{definition}

\subsection{Differentially Private Frequent Substring Mining.}

We begin by formally defining what it means for a substring to be considered frequent.

\begin{definition}
    \label{def:threshold}
    Given a threshold $\freqUpperBound \in \mathbb{R}_+$, a string $P \in \Sigma^{1:\length}$ is called \emph{$\freqUpperBound$-frequent} if $\freq{\DataSet}{P} \ge \freqUpperBound$.  
    Similarly, given a threshold $\freqLowerBound \in \mathbb{R}_+$, we say that $P$ is \emph{$\freqLowerBound$-infrequent} if $\freq{\DataSet}{P} \le \freqLowerBound$.
\end{definition}

\begin{definition}[Inclusion-Exclusion Criterion]
\label{def:candidate-property}
Given a dataset $\DataSet$, a set of strings $\Candidates \subseteq \Sigma^{1:\length}$ satisfies the \emph{Inclusion-Exclusion Criterion} for thresholds $\freqUpperBound \ge \freqLowerBound \ge 0$ if:
\begin{enumerate}[leftmargin=7mm]
    \item[(i)] Every $\freqUpperBound$-frequent string $P \in \Sigma^{1:\length}$ is included in $\Candidates$.
    \item[(ii)] Every $\freqLowerBound$-infrequent string $P \in \Sigma^{1:\length}$ is excluded from $\Candidates$.
\end{enumerate}
We call such a set $\Candidates$ a \emph{$(\freqUpperBound, \freqLowerBound)$-IE set}.
\end{definition}

The \emph{differentially private frequent substring mining problem} aims to identify a \emph{$(\freqUpperBound, \freqLowerBound)$-IE set} $\Candidates$ while simultaneously:
\begin{enumerate}[leftmargin=7mm]
    \item minimizing the processing time and space;
    \item minimizing $\freqUpperBound$ while maximizing $\freqLowerBound$; and
    \item ensuring $\varepsilon$-differential privacy across the execution of the algorithm.
\end{enumerate}

Sometimes, we also want to output, for each $P \in \Candidates$, an approximation $\noisyfreq{\DataSet}{P}$ of its true frequency $\freq{\DataSet}{P}$.  
By setting $\noisyfreq{\DataSet}{P} = 0$ for all $P \notin \Candidates$, the vector $\noisyfreq{\DataSet}$ becomes a sparse approximation of the full frequency vector $\freq{\DataSet}$, with at most
\(
    O\!\left(\sum_{i \in [n]} \card{\Document{i}}^2 / \freqLowerBound\right)
    \;\subseteq\;
    O(n \length^2 / \freqLowerBound)
\)
non-zero entries.
Consequently, the problem is closely connected to the classical task of releasing a differentially private sparse histogram.

Further, \citet{bernardini2025differentially} present an algorithm achieving
\(
    \freqUpperBound, \freqLowerBound 
    \!\in\!
    {\Theta} \bigparen{
        \frac{\length \ln \length }{\eps}
        \bigl(\ln^2 \frac{n\length}{\beta} + \ln \card{\Sigma}\bigr)
    },
\)
with probability at least $1 - \beta$ for any given $\beta \in \paren{0, 1}$, and prove a lower bound showing that
\(
    \freqUpperBound - \freqLowerBound
    \in
    \Omega\!\left( \min\!\left\{n, \,\eps^{-1}\length \ln\card{\Sigma}\right\} \right).
\)
Since $\freqLowerBound \ge 0$, this implies
\(
    \freqUpperBound
    \in
    \Omega\!\left( \min\!\left\{n, \,\eps^{-1}\length \ln\card{\Sigma}\right\} \right).
\)
Thus, both $\freqUpperBound$ and $\freqLowerBound$ are asymptotically tight up to logarithmic factors in their algorithm.
However, their approach requires $O(n^2\length^4)$ space and processing time.
Therefore, the main objective is to develop algorithms that are substantially faster and more space-efficient, while retaining the same asymptotic error guarantees.

\subsubsection*{DP for Frequent Substring Mining.}
To instantiate this definition for our problem, we interpret $\cM : \cY \rightarrow \cZ$ as the randomized algorithm that publishes the (possibly noisy) frequent substrings.  
Here, $\cY$ is the space of all possible datasets, i.e.,
\(
    \cY = (\Sigma^{1:\ell})^n,
\)
and $\cZ$ is the collection of all subsets in $\Sigma^{1:\ell}$.
We treat two datasets $\DataSet, \DataSet' \in \cY$ as \emph{neighbors} if they differ in the string of exactly one user; that is,
\[
    \DataSet \sim \DataSet'
    \quad\text{iff}\quad
    \exists\, i \in [n] \text{ such that }
    \Document{i} \neq \Document{i}' 
    \text{ and }
    \Document{j} = \Document{j}' \text{ for all } j \neq i.
\]
A frequent substring mining algorithm is $\eps$-DP if it satisfies Definition~\ref{def:DifferentialPrivacy} under this neighboring relation.

\section{Preliminaries}
\label{sec:preliminary}

In this section we introduce the technical preliminaries on differential privacy and the string data structures that our algorithm relies on.

\subsection{Privacy}\label{subsec:privacy}

\begin{lemma}[Composition~\citep{DR14}]
    \label{lem:composition}
    Let $\cM_1, \cM_2, \dots, \cM_k$ be a sequence of randomized algorithms where
    \(
        \cM_1 : \cY \to \cZ_1,
    \)
    and
    \(
        \cM_i : \cZ_1 \times \cdots \times \cZ_{i-1} \times \cY \to \cZ_i,
    \)
    for
    \(
        i = 2,\dots,k,
    \)
    i.e., $\cM_i$ may take as input the outputs of $\cM_1,\dots,\cM_{i-1}$ and the dataset in $\cY$.
    Suppose that for each $i\in[k]$ and for every fixed $a_1\in\cZ_1,\dots,a_{i-1}\in\cZ_{i-1}$,
    the randomized mapping $\cM_i(a_1,\dots,a_{i-1},\cdot)$ is $\eps_i$-differentially private.
    Then the algorithm
    \(
        \cM : \cY \to \cZ_1\times\cdots\times\cZ_k,
    \)
    obtained by running the $\cM_i$'s sequentially, is $\eps$-differentially private for
    \(
    \eps \!=\! \sum_{i=1}^k \eps_i.
    \)
\end{lemma}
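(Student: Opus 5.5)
The plan is to prove the adaptive composition lemma by induction on $k$, so the core is the case $k=2$. Fix neighbors $y \sim y'$. For $k=2$ the joint output is $(A_1, A_2)$ with $A_1 = \cM_1(y)$ and $A_2 = \cM_2(A_1, y)$. Its law on $\cZ_1 \times \cZ_2$ is a mixture: for a measurable $E \subseteq \cZ_1 \times \cZ_2$,
\[
\Pr[(A_1,A_2)\in E] \;=\; \int_{\cZ_1} \Pr\bigl[\cM_2(a_1,y)\in E_{a_1}\bigr]\, d\mu_y(a_1),
\]
where $\mu_y$ is the law of $\cM_1(y)$ and $E_{a_1} = \{a_2 : (a_1,a_2)\in E\}$ is the section of $E$. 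We implicitly assume each $\cM_i$ is given by a measurable Markov kernel, so that this disintegration and the measurability of sections are available; this is standard and will be stated as such.

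\textbf{Step 1 (inner kernel).} For each fixed $a_1$, the hypothesis that $\cM_2(a_1,\cdot)$ is $\eps_2$-DP gives
\[
g_y(a_1) \;\doteq\; \Pr\bigl[\cM_2(a_1,y)\in E_{a_1}\bigr] \;\le\; e^{\eps_2}\, g_{y'}(a_1)
\]
pointwise in $a_1$. Integrating against $\mu_y$ yields
\[
\Pr[(A_1,A_2)\in E \mid y] \;\le\; e^{\eps_2}\int g_{y'}\, d\mu_y .
\]

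\textbf{Step 2 (outer mixing measure).} The $\eps_1$-indistinguishability of $\mu_y$ and $\mu_{y'}$ is stated only for indicator functions of events. I extend it to bounded nonnegative measurable $f$ with values in $[0,1]$ via the layer-cake formula
\[
\int f\, d\mu \;=\; \int_0^1 \mu(\{f>t\})\, dt,
\]
which gives $\int f\, d\mu_y \le e^{\eps_1}\int f\, d\mu_{y'}$. Applying this with $f = g_{y'}$ gives
\[
\Pr[(A_1,A_2)\in E \mid y] \;\le\; e^{\eps_1+\eps_2}\,\Pr[(A_1,A_2)\in E \mid y'].
\]
The reverse inequality follows by exchanging $y$ and $y'$, since $\sim$ is symmetric. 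This proves the case $k=2$.

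\textbf{Step 3 (induction).} Assume the claim holds for $k-1$, so the combined mechanism $\cM_{<k} : \cY \to \cZ_1\times\cdots\times\cZ_{k-1}$ is $(\eps_1+\cdots+\eps_{k-1})$-DP. Treat $\cM_k$ as a second-stage mechanism that takes the tuple $(a_1,\ldots,a_{k-1})$ as its auxiliary input. By hypothesis it is $\eps_k$-DP for every fixed tuple, so the $k=2$ case applies and gives the total bound $\sum_{i=1}^k \eps_i$.

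The main obstacle is Step 2: passing from indistinguishability of events to indistinguishability of integrals of functions. This is exactly what the layer-cake argument handles. The only other care needed is measurability of the sections $E_{a_1}$ and of the map $a_1 \mapsto g_y(a_1)$, both of which follow from the kernel assumption.
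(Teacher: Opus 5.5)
Your proof is correct. The paper gives no proof of its own; it only cites Dwork and Roth. The standard argument there for pure differential privacy is a pointwise ratio bound over discrete outputs. One factors $\Pr[(\mathcal{M}_1(y),\mathcal{M}_2(\mathcal{M}_1(y),y))=(r_1,r_2)]=\Pr[\mathcal{M}_1(y)=r_1]\cdot\Pr[\mathcal{M}_2(r_1,y)=r_2]$, notes that each factor changes by at most $e^{\varepsilon_i}$ when $y$ is replaced by $y'$, and sums over $(r_1,r_2)\in E$.

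You take a different route. You disintegrate the joint law over the first output, bound the inner kernel pointwise in $a_1$, and use the layer-cake identity to extend the outer $\varepsilon_1$-indistinguishability from events to $[0,1]$-valued test functions.

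The discrete argument is shorter, but as stated it only covers countable output spaces. Extending it requires densities with respect to a common dominating measure and an almost-everywhere density-ratio bound via Radon--Nikodym. Your argument instead works directly with the paper's event-based definition of $\varepsilon$-indistinguishability on arbitrary measurable spaces. That is the setting the paper actually needs, since it applies composition to the real-valued outputs of the binary tree mechanism on heavy paths.

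The price is the explicit Markov-kernel assumption: $a_1\mapsto\Pr[\mathcal{M}_2(a_1,y)\in E_{a_1}]$ must be measurable for product-measurable $E$, and each stage must use fresh randomness given its inputs. The statement's phrase ``running the $\mathcal{M}_i$'s sequentially'' leaves this implicit, so stating it as you do is right. Your induction step, which bundles $(a_1,\dots,a_{k-1})$ into one auxiliary input from $\mathcal{Z}_1\times\cdots\times\mathcal{Z}_{k-1}$ and reuses the case $k=2$, is also sound.
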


\begin{lemma}[Tail Bound for Laplace Noise]
\label{lem:laplace-tail}
Let $Z \sim \mathrm{Laplace}(\sigma)$ be a Laplace random variable with scale parameter $\sigma > 0$. Then, for any $\beta \in (0,1)$,
\[
    \P{
        |Z| > \sigma \ln \tfrac{1}{\beta} 
    }\le \beta.
\]
\end{lemma}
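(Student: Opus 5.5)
The plan is a direct computation from the density of the Laplace distribution, followed by a comparison with the threshold $\sigma \ln \tfrac{1}{\beta}$. Recall that $Z \sim \mathrm{Laplace}(\sigma)$ has density $f(z) = \frac{1}{2\sigma} e^{-|z|/\sigma}$ on $\R$. I will first compute the two-sided tail exactly for an arbitrary threshold $t \ge 0$, and then substitute $t = \sigma \ln \tfrac{1}{\beta}$.

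For the first step, symmetry of the density about $0$ gives
\[
    \P{|Z| > t} \;=\; 2\int_t^\infty \frac{1}{2\sigma} e^{-z/\sigma}\, dz \;=\; \Bigl[-e^{-z/\sigma}\Bigr]_t^\infty \;=\; e^{-t/\sigma}.
\]
This identity holds for every $t \ge 0$. The strict versus non-strict inequality is irrelevant because $Z$ is absolutely continuous.

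For the second step, since $\beta \in (0,1)$ we have $\ln \tfrac{1}{\beta} > 0$, so $t = \sigma \ln \tfrac{1}{\beta}$ is a valid nonnegative threshold. Plugging it in yields
\[
    \P{|Z| > \sigma \ln \tfrac{1}{\beta}} \;=\; e^{-\ln(1/\beta)} \;=\; \beta,
\]
so the claimed bound holds with equality.

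There is no genuine obstacle here. The only point needing care is the normalization convention: the scale parameter $\sigma$ must enter the density as $e^{-|z|/\sigma}$. If it is used as a variance or standard-deviation parameter instead, one would first rescale $Z$ accordingly.
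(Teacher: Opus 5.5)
Your proof is correct. The paper states this lemma as a standard preliminary without proof, and your direct computation $\Pr[|Z|>t]=e^{-t/\sigma}$ is the standard argument behind it (it even shows the bound holds with equality); on your convention remark, if $\sigma$ is read as a standard deviation, as the paper does elsewhere (scale $\sigma/\sqrt{2}$), the same computation gives $\beta^{\sqrt{2}}\le\beta$, so the inequality still holds without any rescaling.
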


\begin{lemma}[Laplace Mechanism~\citep{DworkMNS06}]
    \label{lem:lap_mech}
    Let $\cY$ be a domain equipped with a neighboring relation $\sim$, and let $F : \cY \to \mathbb{R}^d$ satisfy
    \[
        \| F(y) - F(y') \|_1 \le \Delta
        \quad \text{for all } y \sim y', \text{ for some } \Delta > 0.
    \]
    For each privacy parameter $\eps > 0$, the following mechanism is $\eps$-differentially private:
    \[
        \cM(y) = F(y) + (Z_1,\dots,Z_d),
        \quad\text{where } Z_i \stackrel{\text{iid}}{\sim} \mathrm{Laplace}(\Delta/\eps).
    \]
\end{lemma}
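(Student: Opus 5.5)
We prove the Laplace mechanism through its density: we show pointwise that the output densities for neighboring inputs have ratio at most $e^{\eps}$, and then integrate over an arbitrary measurable event. Fix neighbors $y \sim y'$ and write $b \doteq \Delta/\eps$. Since the $Z_i$ are independent $\mathrm{Laplace}(b)$, the output $\cM(y)$ has density on $\mathbb{R}^d$
\[
    p_y(z) \;=\; \prod_{i=1}^d \frac{1}{2b}\exp\!\left(-\frac{|z_i - F(y)_i|}{b}\right)
    \;=\; \frac{1}{(2b)^d}\exp\!\left(-\frac{\|z - F(y)\|_1}{b}\right),
\]
and $\cM(y')$ has the analogous density $p_{y'}$ with $F(y')$ in place of $F(y)$.

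The key step is a pointwise ratio bound. Using the triangle inequality $\|z - F(y')\|_1 \ge \|z - F(y)\|_1 - \|F(y) - F(y')\|_1$, we get for every $z$
\[
    \frac{p_y(z)}{p_{y'}(z)} \;=\; \exp\!\left(\frac{\|z - F(y')\|_1 - \|z - F(y)\|_1}{b}\right) \;\le\; \exp\!\left(\frac{\|F(y)-F(y')\|_1}{b}\right) \;\le\; \exp\!\left(\frac{\Delta}{\Delta/\eps}\right) = e^{\eps}.
\]
Exchanging the roles of $y$ and $y'$, which is allowed because $\sim$ is symmetric, gives $p_{y'}(z) \le e^{\eps} p_y(z)$ for every $z$ as well.

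For any Borel set $E \subseteq \mathbb{R}^d$, integrating the pointwise bound gives $\Pr[\cM(y)\in E] = \int_E p_y \le e^{\eps}\int_E p_{y'} = e^{\eps}\Pr[\cM(y')\in E]$. The symmetric inequality gives the matching lower bound. This is exactly the $\eps$-indistinguishability of Definition~\ref{def:indistinguishability}, and since $y \sim y'$ was arbitrary, $\cM$ is $\eps$-differentially private in the sense of Definition~\ref{def:DifferentialPrivacy}.

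The only step needing care is the passage from the pointwise density bound to arbitrary measurable events. This requires both output distributions to be absolutely continuous with respect to Lebesgue measure on $\mathbb{R}^d$, which holds because they are product Laplace laws with everywhere-positive densities. With that in place, the remaining steps are routine.
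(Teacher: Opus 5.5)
Your proposal is correct and is the standard density-ratio argument of \citet{DworkMNS06}; the paper simply cites this lemma without proving it, so there is nothing different to compare against. One small slip: the inequality you quote, $\|z-F(y')\|_1 \ge \|z-F(y)\|_1 - \|F(y)-F(y')\|_1$, actually bounds $p_{y'}/p_y$, whereas the first step of your displayed chain for $p_y/p_{y'}$ needs $\|z-F(y')\|_1 \le \|z-F(y)\|_1 + \|F(y)-F(y')\|_1$. This is the same triangle inequality with the roles of $y$ and $y'$ swapped, so the conclusion is unaffected.
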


The Binary Tree Mechanism represents each prefix sum over a sequence of length $T$
by summing the noisy counts of $\lceil \log T \rceil + 1$ dyadic intervals. 
Independently, at each dyadic node we add Laplace noise. 
The properties of the Binary Tree Mechanism are summarized below.

\begin{lemma}[Binary Tree Mechanism~\citep{ChanSS11,DworkNPR10}]
    \label{lem:binary-tree}
    Fix $\eps_0 \in \R_+$ and let $\sigma \doteq \eps_0^{-1}\cdot \log d$.
    Then there exists a mechanism $\cB^{\sigma} : \R^d \to \R^d$, with parameter $\sigma$, with the following guarantees:
    \begin{itemize}[leftmargin=5mm, label=$\triangleright$]

        \item \textbf{Privacy Guarantee.}
        For all $x, x' \in \R^d$, the outputs $\cB^{\sigma}(x)$ and $\cB^{\sigma}(x')$ are
        $\eps_0 \cdot \norm{x - x'}_1$-indistinguishable.

        \item \textbf{Utility Guarantee.}
        Let $x \in \R^d$ and $\beta \in (0,1)$.
        Define the prefix-sum vector
        \(
            y \doteq 
            \bigl(
                x_1,\;
                x_1 + x_2,\;
                \ldots,\;
                {\textstyle\sum_{i \in [d]} x_i}
            \bigr).
        \)
        Then
        \[
            \P{
                \norm{\cB^{\sigma}(x) - y}_\infty
                \;>\;
                 2\sqrt{2} \cdot \sigma \cdot \ln \frac{2d}{\beta}
            }
            \;\le\;
            \beta.
        \]

        \item \textbf{Streaming Implementation.}
        The mechanism $\cB^{\sigma}$ supports online access: for each $t \in [d]$, upon receiving the first $t$ coordinates of $x$, it outputs the first $t$ coordinates of $y$.
        It uses $O(\log d)$ additional space and computes each coordinate of $y$ in $O(\log d)$ time.
    \end{itemize}
\end{lemma}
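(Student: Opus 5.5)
The plan is to instantiate the classical dyadic-interval construction and verify its three guarantees separately. Assume without loss of generality that $d$ is a power of two; otherwise pad $x$ with zeros, which changes neither $\norm{x-x'}_1$ nor the prefix sums of interest and only alters constants inside logarithms. Build the complete binary tree over $[d]$. Each node $v$ corresponds to a dyadic interval $I_v$, and its exact count is $c_v(x) = \sum_{i\in I_v} x_i$. The mechanism releases $\tilde c_v = c_v(x) + Z_v$, where the $Z_v \sim \mathrm{Laplace}(\sigma)$ are independent. For each $t$, it outputs $\tilde y_t = \sum_{v \in D(t)} \tilde c_v$, where $D(t)$ is the canonical decomposition of $[1,t]$ into at most $\lceil\log d\rceil+1$ disjoint dyadic intervals (one per level at most).

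\textbf{Privacy.} Every coordinate $x_i$ lies in exactly one interval per level, so it lies in $\log d + 1$ nodes. Hence
\[
\sum_v |c_v(x)-c_v(x')| \le (\log d+1)\norm{x-x'}_1 .
\]
The output is a post-processing of the noisy counts, so Lemma~\ref{lem:lap_mech} applies with this sensitivity. Because the Laplace density ratio bound holds pointwise for any pair $x,x'$, not only for neighbors, we get $\tfrac{(\log d+1)}{\sigma}\norm{x-x'}_1$-indistinguishability. The statement sets $\sigma = \eps_0^{-1}\log d$, which leaves an off-by-one between $\log d$ and $\log d + 1$. I would absorb it either by counting only the $\log d$ levels actually used in decompositions (the root is needed only for $t=d$ and can be dropped by using levels $0,\dots,\log d -1$ plus a single extra node), or by interpreting $\log$ generously.

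\textbf{Utility.} Fix $t$. The error $\tilde y_t - y_t = \sum_{v\in D(t)} Z_v$ is a sum of $k \le \log d + 1$ independent $\mathrm{Laplace}(\sigma)$ variables. This is the main technical step. A naive union bound using Lemma~\ref{lem:laplace-tail} gives error $k\sigma\ln(2dk/\beta)$, which is worse than the claimed bound by a factor of roughly $\log d$. The target $2\sqrt2\,\sigma\ln(2d/\beta)$ therefore requires a sub-exponential concentration bound for sums of Laplace variables, of the form (Chan--Shi--Song)
\[
\Pr\Big[\big|\textstyle\sum_{j\le k} Z_j\big| > \nu\Big] \le 2\exp\Big(-\tfrac{\nu^2}{8\sigma^2 k}\Big)\quad\text{for } \sqrt{k}\,\sigma \le \nu \le k\sigma .
\]
This is proven via the MGF bound $\mathbb{E}[e^{\lambda Z}] = (1-\lambda^2\sigma^2)^{-1} \le e^{2\lambda^2\sigma^2}$ for $|\lambda|\le 1/(\sqrt2\sigma)$, followed by Chernoff. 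I would choose $\nu = 2\sqrt{2}\,\sigma\sqrt{k\ln(2d/\beta)}$ to get failure probability $\beta/d$ per coordinate. I would then check that this is at most $2\sqrt2\,\sigma\ln(2d/\beta)$, using $k \lesssim \ln(2d/\beta)$, and handle the regime $\nu > k\sigma$ with the exponential tail branch. A union bound over the $d$ coordinates gives the $\ell_\infty$ bound with probability $1-\beta$. Matching the exact constant under the stated $\sigma$ scaling is where I expect the most bookkeeping.

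\textbf{Streaming.} Process $x$ online and keep, for each level, only the partial sum of the currently open dyadic block together with the noisy count of the most recently completed block at that level. When $x_t$ arrives, add it to the open partial sums. For each block that closes at time $t$, add fresh Laplace noise, store the noisy count, and reset the partial sum. Then $\tilde y_t$ is the sum of the stored noisy counts at the levels of the binary representation of $t$. This uses $O(\log d)$ words of memory and $O(\log d)$ time per step.
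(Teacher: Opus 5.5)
The dyadic tree, the privacy argument and the streaming counter are fine. The paper gives no proof here, only a citation, and these parts are the standard argument behind it. Privacy holds because the noisy block counts have $\ell_1$-sensitivity at most the number of levels times $\|x-x'\|_1$, and this works for arbitrary pairs $x,x'$. The \textbf{genuine gap is the utility step}, the one you flagged.

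You use the relaxed bound $\mathbb{E}[e^{\lambda Z}]\le e^{2\lambda^2\sigma^2}$ for $|\lambda|\le 1/(\sqrt2\sigma)$. With it, your argument can only close when $k\le L$, where $L\doteq\ln(2d/\beta)$.
\begin{itemize}
\item The final comparison $2\sqrt2\,\sigma\sqrt{kL}\le 2\sqrt2\,\sigma L$ needs $k\le L$.
\item At your $\nu$, the optimizing $\lambda=\nu/(4k\sigma^2)$ is admissible only if $k\ge L$. Under your stated range $\nu\le k\sigma$ it needs $k\ge 8L$, which never happens.
\item So when $k\le L$ it is the exponential branch, not the sub-Gaussian one, that does the work.
\end{itemize}
But $k$ can be as large as $\log_2 d\approx 1.44\ln d$, so $k>L$ whenever $\beta$ is not very small. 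For example, $d=2^{20}$ and $\beta=0.1$ give $k=20>L\approx 16.9$. Your ``$k\lesssim\ln(2d/\beta)$'' hides exactly the constant that matters.

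In that regime no choice of $\lambda$ rescues the relaxed bound. The best Chernoff exponent it gives at threshold $2\sqrt2\,\sigma L$ is $-L^2/k>-L$. So the one-sided per-coordinate failure bound exceeds $\beta/(2d)$, and the union bound over $d$ coordinates does not reach $\beta$.

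The missing idea is to use the \emph{exact} Laplace MGF $\mathbb{E}[e^{\lambda Z}]=(1-\lambda^2\sigma^2)^{-1}$ at the single point $\lambda=1/(\sqrt2\sigma)$, where it equals $2$. Your relaxation replaces this $2$ by $e$, losing a factor $(e/2)^k\approx d^{0.44}$. Each prefix uses at most one block per level, so $k\le\log_2(2d)$ in every variant (with or without the root, padded or not). Writing $S_t\doteq\tilde y_t-y_t$, this gives $\mathbb{E}[e^{\lambda S_t}]=2^k\le 2d$, and Markov's inequality yields
\[
\Pr\Bigl[S_t\ge 2\sqrt2\,\sigma\ln\tfrac{2d}{\beta}\Bigr]\;\le\;2^k\,e^{-2\ln(2d/\beta)}\;\le\;2d\cdot\frac{\beta^2}{4d^2}\;=\;\frac{\beta^2}{2d}.
\]
The same holds for $-S_t$. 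A union bound over $t\in[d]$ and both signs gives failure probability at most $\beta^2\le\beta$, with no case split and no sub-Gaussian regime.

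On privacy, two corrections:
\begin{itemize}
\item Dropping the root is the right fix when $d$ is a power of two, because $[1,d]$ is the union of the two top-level blocks. No extra node should be added: one would bring the per-coordinate count back to $\log d+1$.
\item For general $d$ the construction gives $\lceil\log_2 d\rceil$ levels, and for $d=1$ the statement's $\sigma=0$ makes it unsatisfiable. So ``interpreting $\log$ generously'' should be stated explicitly as $\sigma=\varepsilon_0^{-1}\lceil\log_2 d\rceil$ with $d\ge 2$.
\end{itemize}
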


\subsection{Suffix Trees}\label{subsec:suffix-trees}

\subsubsection*{\bf Strings.}
Let $\Sigma$ be an alphabet.
A \emph{string} is an element of $\Sigma^*$.
For a string $S$,
its length is $|S|$, and $S[i]$ denotes its $i^{\text{th}}$ character.
For integers $l \leq r$, let
$\Sigma^{l:r} \doteq \{ S \in \Sigma^* \mid l \leq |S| \leq r \}$.
A  \emph{substring} of $S$, 
starting at position $i$ 
and ending at position $j$, is written as $S[i \ldots j]$. 
A substring $S[1\ldots j]$ is called a \emph{prefix} of $S$, 
while $S[i \ldots |S|]$ is called a \emph{suffix} of $S$.
Let $\suffix{S}$ denote the set of all suffixes of $S$.
For a multiset of strings $\DataSet$, let $\suffix{\DataSet} \doteq \cup_{S \in \DataSet} \suffix{S}$.
The \emph{concatenation} of strings $S$ and $T$ is $S \circ T$.
For two strings $P$ and $S$, an \emph{occurrence} of $P$ in $S$
is a position $i$ such that $S[i \ldots i+|P|-1] = P$.
The \emph{frequency} of $P$ in $S$, denoted by $\freq{S}{P}$,
is the number of occurrences of $P$ in $S$.

\subsubsection*{\bf Tries}
Given a multiset of non-empty strings $\DataSet = \{S_1, \ldots S_k\} \subseteq \Sigma^*$, 
the \emph{trie} $\trieof{\DataSet}$ over $\DataSet$ is a rooted tree in which edges are labeled with a single character from $\Sigma$; 
no two outgoing edges from any node share the same label; and
for each string $S_i$ in $\DataSet$, there exists a unique 
root-to-leaf path whose concatenation of edge labels spells $S_i$ exactly.
For any prefix $P$ of $S_i$, let $\trienode{P}$ denote the unique node whose path from the root spells $P$.

\subsubsection*{\bf Suffix trees.}
For $S \in \Sigma^*$, the \emph{suffix tree} is the \emph{path-compressed} trie of all suffixes of $S$,
obtained by contracting each maximal non-branching path into a single edge whose
label is the concatenation of the labels along that path.
It contains $O(|S|)$ nodes and can be constructed in $O(|S|)$ time,{assuming $\Sigma$ is a polynomial-sized integer alphabet}~\citep{conf/swat/1973/weiner, journal/jacm/1976/mccreight, journal/algorithmica/1995/ukkonen, DBLP:conf/focs/Farach97}.
For a node $u$ in a suffix tree,

\begin{itemize}[leftmargin=5mm,label=$\triangleright$]
\item 
$\str{u}$ denotes the \emph{path label} of $u$.
\item 
$\parent{u}$ denotes the parent of $u$.
\item 
For a symbol $c \in \Sigma$,
$\child{u}{c}$ denotes the child node $v$ of $u$ such that 
the label of the edge $uv$ begins with $c$. 
If such $v$ does not exist, then  $\child{u}{c} = \varnothing$.
\item 
$\children{u} \doteq \{ (\child{u}{c}, c) : c \in \Sigma \text{~and~}  \child{u}{c} \neq \varnothing \}$.
\end{itemize}

Every substring $P$ of $S$ can be located by traversing the suffix tree from the root following the characters in $P$.
Such a traversal takes $O(|P|)$ time and finishes at a unique \emph{locus} (either a node or a position within an edge). 
If the traversal ends inside the edge leading to $u$, the $P$ equals the prefix of $\str{u}$ of the appropriate length.
The frequency of $P$ equals the number of leaf nodes in the subtree rooted at $u$.
Since all substrings whose loci lie along the same compressed edge share the same frequency, we may store exact frequencies at branching nodes.
This requires $O(|S|)$ additional space.

We make use of the following construction for storing a subset of evenly spaced suffixes of a string. 

\begin{definition}[$r$-Spaced Sparse Suffix Tree]
\label{def:index_align_SST}
Let $S$ be a string and let $r\ge 1$.
The \emph{$r$-spaced sparse suffix tree} of $S$ is the path-compressed trie built from all the suffixes $S[j\ldots |S|]$ whose starting index $j$ satisfies $j\equiv 1 \pmod{r}$.
\end{definition}

The following result states the properties of the $r$-spaced sparse suffix tree built on the $r$-spaced suffixes of a multiset of strings.

\begin{lemma}[\cite{karkkainen1996sparse}]
\label{lem:r_spaced_SST_multi}
For $r\ge 1$ and a multiset $\{S_1,\ldots,S_n\}$, where $|S_i| \equiv 0 \pmod{r}$,
let $T = S_1\#_1^r S_2\#_2^r\cdots S_n\#_n^r$.  
Then the $r$-spaced sparse suffix tree of $T$ can be constructed in
$O(|T|)$ time and $O(|T|/r)$ space. 
\end{lemma}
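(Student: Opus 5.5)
The plan is to reduce the problem to an ordinary suffix tree over a larger alphabet of \emph{blocks}, build that tree with a linear-time integer-alphabet algorithm, and then refine each node's branching back to single characters of $\Sigma$.

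\textbf{Step 1 (blocking).} Each $S_i$ has length divisible by $r$, and each separator $\#_i^r$ has length exactly $r$. Hence $T$ splits into $m \doteq |T|/r$ consecutive blocks of length $r$, and a position $j$ satisfies $j\equiv 1 \pmod r$ exactly when it is the start of a block. Let $T'$ be the string of length $m$ over the alphabet $\Sigma^r$ whose $k$-th symbol is the $k$-th block of $T$. The suffixes of $T$ that start at positions $\equiv 1 \pmod r$ are then exactly the suffixes of $T'$, read block by block.

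\textbf{Step 2 (order-preserving names).} Assign each block an integer name in $[m]$ so that comparing names agrees with comparing the blocks lexicographically over $\Sigma$. I would obtain these names by an LSD radix sort of the $m$ blocks over their $r$ character positions. Since $\Sigma$ is a polynomial-sized integer alphabet, each character can be split into $O(1)$ digits in base $m$. The sort therefore costs $O(r\cdot m)=O(|T|)$ time and uses only $O(m)$ working space (an array of block indices plus counters), with $T$ read in place.

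\textbf{Step 3 (suffix tree over blocks).} Build the suffix tree of the renamed string $T'$ using Farach's algorithm, in $O(m)$ time and space. The separators $\#_i$ are distinct, so no suffix is a proper prefix of another. Consequently every suffix ends at its own leaf in both the block-level tree and the character-level tree.

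\textbf{Step 4 (refining to characters).} This is the step I expect to be the main obstacle. In the block tree, two children of a node $u$ begin with different blocks, but those blocks may share a common prefix of characters. In the true character-level sparse suffix tree, $u$ must therefore be expanded into a small compacted trie.

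For each internal node, I would take its children in name order, which is lexicographic order by Step 2. For each adjacent pair, I would compute the character-level longest common prefix of their first blocks by direct comparison in $O(r)$ time. Given these LCP values, the local compacted trie is the Cartesian tree of the LCP array, which a stack can build in time linear in the number of children. It introduces at most one new internal node per child.

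Since the block tree has $O(m)$ edges, this step takes $O(r\cdot m)=O(|T|)$ time in total. The final tree has $O(m)$ nodes. Edge labels are stored as (start, length) pointers into $T$, so the structure occupies $O(|T|/r)$ space.

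\textbf{Correctness.} A path in the refined tree spells the concatenated blocks of the corresponding suffix. Branching now happens exactly where two $r$-spaced suffixes first differ at the character level. Therefore the result is the path-compressed trie of the $r$-spaced suffixes, i.e., the $r$-spaced sparse suffix tree of Definition~\ref{def:index_align_SST}.
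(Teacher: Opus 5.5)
The paper gives no proof of this lemma; it simply imports it from K\"arkk\"ainen and Ukkonen. Your proposal is therefore a genuinely different, self-contained route. You view $T$ as a string of $m=|T|/r$ block symbols, rename the blocks order-preservingly, build an ordinary suffix tree with Farach's integer-alphabet algorithm, and then expand each node locally from the LCPs of its children's first blocks. This is the natural meta-character argument for evenly spaced suffixes. What it buys is transparency. The $O(|T|)$ time comes from $O(r)$ work per block and per sibling pair. The $O(|T|/r)$ space comes from a tree with $O(m)$ nodes and pointer-encoded edge labels. The citation, by contrast, leaves these dependencies implicit, in particular the dependence on the alphabet.

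Two details need tightening.

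First, in Step~4, consider a block-level node $u$ whose children's first blocks all share a nonempty common prefix (e.g.\ $r=2$ with first blocks \texttt{ab} and \texttt{ac}). Such a $u$ is not a branching locus at the character level. After you hang the local Cartesian tree below $u$, the node $u$ has out-degree one. It must be spliced out by merging its incoming edge with the new edge, whose length is $\min_j \mathrm{LCP}$. One linear pass does this, but your final claim that the output is the path-compressed trie depends on it.

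Second, Step~2 needs every character of $T$ to occupy $O(1)$ digits in base $m$. That requires an alphabet polynomial in $m=|T|/r$, not merely in $|T|$. If $r$ is very large relative to $m$ and the alphabet has size $|T|^{\Theta(1)}$, your LSD radix sort costs $\Theta(|T|\log_m|T|)$, which exceeds the $O(|T|)$ bound. This is harmless for every use in the paper: there $T$ is over $\{\texttt{0},\texttt{1},\texttt{\$}\}$ plus at most $m$ distinct separators, so each character is a single digit in base $m+3$. You should state that assumption explicitly rather than appeal to $\Sigma$ being polynomial-sized.
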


The result requires that string lengths are a multiple of $r$ so that the correct ($r$-spaced) suffixes of all strings are stored in the tree.
This is also why the string delimiters are repeated $r$ times.

For $\DataSet = \{S_1,\ldots,S_n\}$,
during the construction of the $r$-spaced suffix tree of $T = S_1\#_1^r S_2\#_2^r\cdots S_n\#_n^r$, we may augment each explicit node $v$ with a frequency counter that stores the number $\freq{\DataSet}{\str{v}}$ of leaf nodes in the subtree rooted at $v$.
This can be maintained without affecting the asymptotic bounds in Lemma~\ref{lem:r_spaced_SST_multi}.

\subsection{Heavy-Light Decomposition}

The \emph{heavy-light decomposition} of a rooted tree $T$
partitions $T$ into edge-disjoint paths.
Specifically, for a non-leaf node $u$ in $T$,
let $v$ be the child of $u$ with the largest number of nodes 
in the subtree rooted at $v$;
we say edge $uv$ is a \emph{heavy} edge,
and for any other child $w$ of $u$,
we say edge $uw$ is a \emph{light} edge.
The decomposition is defined as follows.
First, the path from the root to a leaf where each edge is a heavy edge is a heavy path.
Then, for each edge $uv$ such that $u$ is on the heavy path and $v$ is not, recursively decompose the subtree rooted at $v$.
\begin{lemma}[\cite{SleatorT83}]
    \label{lem:hl-decomposition}
    If tree $T$ has $m$ nodes, then a heavy-light decomposition of $T$ can be constructed in $O(m)$ time, and each root-to-leaf path of $T$ has at most $\lceil \log m \rceil$ light edges.
\end{lemma}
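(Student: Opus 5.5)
We follow the classical argument of \citet{SleatorT83}, which has two independent parts: a linear-time construction and a halving argument for light edges. For every node $u$ of $T$, write $\mathrm{sz}(u)$ for the number of nodes in the subtree rooted at $u$.

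\textbf{Construction.} First compute $\mathrm{sz}(u)$ for all $u$ in one post-order traversal, using $\mathrm{sz}(u) = 1 + \sum_{v \text{ child of } u} \mathrm{sz}(v)$. Each node is visited once and each edge is inspected once, so this takes $O(m)$ time. We use an explicit stack so that the depth of $T$ does not matter.

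Next, for every non-leaf $u$, scan its children once and mark as heavy the edge to a child of maximum $\mathrm{sz}$, breaking ties arbitrarily. Summed over all nodes, this costs $O(\sum_u \deg(u)) = O(m)$.

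The heavy paths themselves come from a single top-down traversal. A node that is the root, or that is entered through a light edge, starts a new path. Each node is then appended to the path of its parent if it is entered through a heavy edge. This reproduces exactly the recursive description of the decomposition given above and again takes $O(m)$ time.

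\textbf{Light-edge bound.} The key claim is: if $uv$ is a light edge, then $\mathrm{sz}(v) \le \mathrm{sz}(u)/2$. To see this, let $w$ be the heavy child of $u$. Then $w \neq v$ and $\mathrm{sz}(w) \ge \mathrm{sz}(v)$. Moreover $\mathrm{sz}(u) \ge 1 + \mathrm{sz}(v) + \mathrm{sz}(w) \ge 1 + 2\,\mathrm{sz}(v)$, hence $\mathrm{sz}(v) < \mathrm{sz}(u)/2$.

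Now fix a root-to-leaf path and suppose it crosses $k$ light edges. Subtree sizes never increase as we move down the path, and each light edge at least halves the current size. Starting from $\mathrm{sz}(\text{root}) = m$, the node reached after the $k$-th light edge therefore has subtree size at most $m/2^k$. That size is at least $1$, so $2^k \le m$, i.e., $k \le \log m \le \lceil \log m \rceil$.

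Neither step is a real obstacle. The only point needing care is the strict inequality in the halving claim, which relies on $w \neq v$ together with the $+1$ contributed by $u$ itself.
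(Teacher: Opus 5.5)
Your proof is correct and takes the same approach as the paper. The paper cites Sleator and Tarjan and justifies the bound only by noting that crossing a light edge at least halves the current subtree size. You make that halving step precise via $\mathrm{sz}(u) \ge 1 + \mathrm{sz}(v) + \mathrm{sz}(w)$ and spell out the standard linear-time construction that the paper takes from the citation.
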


Intuitively, the lemma is correct because traversing a light edge reduces the current subtree size by at least a factor of two. 
Therefore, a root-to-leaf path can encounter at most $\lceil \log m \rceil $ such reductions before reaching a leaf.

\section{Algorithm}
\label{sec:algorithm}

The main properties of our algorithm are summarized below.
\begin{theorem}
    \label{thm:main}
    Let $\DataSet = \{\Document{1}, \ldots, \Document{n}\}$ be a dataset of $n$ user-contributed strings over an alphabet $\Sigma$, each of length at most $\length \in \N_+$.  
    Let $\eps > 0$, $\beta \in (0,1)$, and $\lengthprime = \length (\lceil\log |\Sigma| \rceil+1) $, and set
    \[
    \eps_0 = \frac{ \eps/ \log \length }{4\length \log( n \lengthprime)}
    \quad \text{and} \quad    
    \tau^* = { \eps_0^{-1} \cdot  \paren{ \log \lengthprime } \cdot \ln(n\lengthprime /\beta)}.
    \]
    Then, for thresholds
    \begin{align*}
        \freqUpperBound &= 9 \cdot \tau^*  
        \quad \text{and} \quad
        \freqLowerBound \geq  \length \paren{\log \lengthprime} 
    \end{align*}
    there exists an $\eps$-differentially private algorithm that, with probability at least $1 - \beta$, outputs a set $\Candidates$ that satisfies the $(\freqUpperBound, \freqLowerBound)$-Inclusion-Exclusion Criterion, 
    while using $O(n  \lengthprime + |\Sigma|)$ time and $O(n\length + |\Sigma| )$ space.  
\end{theorem}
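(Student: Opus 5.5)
We prove Theorem~\ref{thm:main} by building the algorithm explicitly and then checking privacy, utility, and cost separately. \textbf{Encoding.} Encode each symbol of $\Sigma$ by a fixed-width binary code of length $\lceil\log|\Sigma|\rceil$, preceded by a marker bit, so every user string becomes a binary string of length at most $\lengthprime$. A substring of the original string becomes an aligned block of the encoding, so frequencies of $\Sigma$-substrings equal frequencies of their images at aligned positions. We build the $r$-spaced sparse suffix tree with $r=\lceil\log|\Sigma|\rceil+1$ over $T=S_1\#_1^r\cdots S_n\#_n^r$ (Lemma~\ref{lem:r_spaced_SST_multi}), augmented with frequency counters. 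This gives $O(n\lengthprime)$ time and $O(n\length)$ space. Precomputing the code table costs $O(|\Sigma|)$.

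\textbf{Top-down phases.} For lengths $k=1,2,4,\dots$ up to $\lengthprime$, phase $k$ maintains a candidate set $\Candidates{k}$.
\begin{enumerate}[leftmargin=7mm]
\item Build the compact trie $T_k$ of the suffixes of $\Candidates{k}$ and take its heavy-light decomposition (Lemma~\ref{lem:hl-decomposition}).
\item For each $s\in\Candidates{k}$, explore $s\circ T_k$ top-down, reading exact counts from the suffix tree. Along each heavy path, feed the per-edge frequency decrements into a Binary Tree Mechanism (Lemma~\ref{lem:binary-tree}), so noisy frequencies along the path are noisy prefix sums. Light edges restart a fresh noisy count.
\item Prune any node whose noisy frequency falls below a threshold of roughly $5\tau^*$.
\end{enumerate}
Each root-to-leaf path has at most $\lceil\log\lengthprime\rceil$ light edges, which explains the extra $\log\lengthprime$ factor in $\tau^*$.

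\textbf{Privacy.} Changing one user's string changes $\freq{\DataSet}{P}$ in total $\ell_1$ mass by at most $O(\length\cdot\text{(number of lengths)})$ summed over the substrings queried in a phase. Each occurrence contributes to at most $O(\lengthprime)$ path-decrement coordinates, and $\log(n\lengthprime)$ accounts for the number of heavy paths a single occurrence can touch. Applying the Binary Tree privacy guarantee with $\eps_0$ to this $\ell_1$ change gives $\eps/\log\length$ per phase. Composition (Lemma~\ref{lem:composition}) over the $\log\length$ phases then gives $\eps$. One subtlety is that the explored set depends on the data. We handle it by defining the mechanism over the full infinite family of candidate extensions, with noise drawn lazily: unexplored counts are simply never released, and the released output is a post-processing of the full noisy vector.

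\textbf{Utility and cost.} A union bound over the $O(n\lengthprime)$ relevant noisy sums, using the utility bound of Lemma~\ref{lem:binary-tree} together with the $\lceil\log\lengthprime\rceil$ light edges per path, gives additive error of at most about $4\tau^*$ with probability $1-\beta$. By induction over phases:
\begin{itemize}[leftmargin=5mm, label=$\triangleright$]
\item Any $9\tau^*$-frequent string of length $k+t$ has its prefix and suffix of length $k$ both frequent. They are therefore in $\Candidates{k}$, so the string is reached in $s\circ T_k$ and passes the threshold.
\item Any string surviving the threshold has true frequency above $\tau^*\ge\freqLowerBound$.
\end{itemize}
For cost, the number of surviving nodes is at most $\sum_i|\Document{i}|^2/\tau^*$ weighted appropriately, and since $\tau^*\gtrsim\length\log\lengthprime$ this is $O(n\lengthprime)$ in total. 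Each exploration step is $O(1)$ amortized via suffix-tree navigation and the streaming implementation of the Binary Tree Mechanism.

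\textbf{Main obstacle.} The hardest step is the privacy sensitivity accounting: bounding, per phase, the total $\ell_1$ change over all queried heavy-path decrement vectors, given that the explored tree itself depends on the data. We will treat it by fixing the full (data-independent) trie over all binary strings of the phase's length range. On this trie, one user's $\le\lengthprime$ occurrences change at most $O(\length\log(n\lengthprime))$ coordinates each by $1$, and we verify this matches the $4\length\log(n\lengthprime)$ in $\eps_0$.
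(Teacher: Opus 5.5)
Your algorithm is the paper's: the aligned $r$-spaced sparse suffix tree as exact oracle, doubling phases, the trie $T_k$ with its heavy-light decomposition, one Binary Tree Mechanism per heavy path fed with frequency decrements, threshold pruning, and composition. The gap is in the privacy sensitivity step, which you flag as the crux but do not close, and the route you propose for it would fail.

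\textbf{The missing privacy idea.} Your accounting (mass $O(\length)$ per queried length, each occurrence touching $O(\lengthprime)$ decrement coordinates) is the naive one. It only gives per-phase sensitivity of order $\length k$, and nothing in it explains why $O(\length\log(n\lengthprime))$ should hold. The missing idea is a telescoping bound per heavy path.
\begin{itemize}
\item On a heavy path $(u_0,\dots,u_d)$, $\str{u_{j-1}}$ is a prefix of $\str{u_j}$, so $\freq{\Document}{\str{u_j}}$ is non-increasing in $j$. The decrements of a single string therefore sum in absolute value to at most $\freq{\Document}{\str{u_0}}$.
\item Hence replacing $\Document$ by $\Document'$ changes that path's input by at most $2\max\{\freq{\Document}{\str{u_0}},\freq{\Document'}{\str{u_0}}\}$ in $\ell_1$, independently of $d$. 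Equivalently, an occurrence affects only the head entry and the one entry where it leaves the path.
\item Then charge occurrences to heads. Each of the at most $\length$ character-aligned starts of $\Document$ (not $\lengthprime$, since the $r$-spaced oracle counts only aligned starts) fixes at most one $s\in\Candidates{k}$ and one root-to-leaf path in $s\circ T_k$. That path meets at most $1+\lceil\log|T_k|\rceil=O(\log(n\lengthprime))$ heavy-path heads. This is the content of Lemmas~\ref{lem:heavy_path_sens} and~\ref{lem:affected-mass}.
\end{itemize}

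\textbf{Why your fix for data-dependent exploration fails.} Fixing the full trie over all binary strings of the phase's lengths breaks exactly this count. In a complete binary trie both children of every node have equal size, so an adversarial $\Document$ can take a light edge at every level and cross $\Theta(k)$ heavy paths. The correct object is the family of unpruned trees $s\circ T_k$, $s\in\Candidates{k}$, which is fixed by the previous phase's output. Condition on it (as Lemma~\ref{lem:composition} allows), run a mechanism on every heavy path of every such tree, and treat the pruned search as post-processing.

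Relatedly, the light-edge count enters through the sensitivity, i.e.\ through $\eps_0$, not through the noise. Each estimate comes from a single mechanism, and the $\log\lengthprime$ in $\tau^*$ is its scale $\sigma=\eps_0^{-1}\log\lengthprime$. Also, the light-edge bound is $\lceil\log|T_k|\rceil$, not $\lceil\log\lengthprime\rceil$.

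\textbf{Phase count.} Doubling from $k=1$ gives about $\log\lengthprime$ phases. The stated $\eps_0$ spends $\eps/\log\length$ per phase, so composition yields $\eps\log\lengthprime/\log\length>\eps$. You need a Laplace base phase on the $2^{r-1}$ whole code blocks (sensitivity $2\length$ by Lemma~\ref{lem:sens_bound_aligned}), then doubling from $k=r$, for $\lceil\log\length\rceil$ phases.

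\textbf{Cost bound.} Count survivors through aligned occurrences: at most $n\length/\freqLowerBound\le n/\log\lengthprime$ per length, so $n\lengthprime/\log\lengthprime$ in total. The paper uses this $1/\log\lengthprime$ slack to absorb the $O(\log k)$ per-coordinate cost of Lemma~\ref{lem:binary-tree}. Your amortized $O(1)$ claim is not supplied by that lemma and would need its own argument.
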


In the remainder of this section, we give a high-level overview of the algorithm that obtains Theorem~\ref{thm:main}.
The formal analysis is delegated to \cref{sec:analysis}.

\subsection{Prior Techniques}\label{subsec:prior-techniques}
Before diving into the details of the algorithm, we examine earlier approaches to shed light on why privately mining frequent substrings at scale remains a challenging problem.

A naive approach would be to directly construct a noisy ($\eps$-differentially private) histogram of all strings in $\Sigma^{1:\length}$.
Subsequently, we output only those strings whose noisy frequencies exceed a prescribed threshold.
However, this direction entails two limitations.
First, the frequency vector $(\freq{\DataSet}{p})_{p \in \Sigma^{1:\ell}}$ has sensitivity $\Omega(\length^2)$, thereby requiring a large amount of noise for effective privacy.
Second, as the number of strings is exponential in the string length, this approach is computationally infeasible.

To mitigate both universe explosion and problem sensitivity,~\citet*{bernardini2025differentially} adopt a top-down strategy.
Starting from a single character, they expand the search to longer candidate substrings if the noisy counts of their prefixes exceed a threshold. 
This allows the exponential search space to be significantly pruned.
However, their approach is still limited by the number of candidate substrings they consider.
For example, given the set $\Candidates{k}$ of all $\freqUpperBound$-frequent substrings of length $k$, to generate $\Candidates{2k}$ they exhaustively consider \emph{all} $|\Candidates{k}|^2$ concatenations of substrings in $\Candidates{k}$.
This leads to a (prohibitive) quadratic blow-up in runtime and space.

In addition, they only consider candidates at a logarithmic number of substring lengths and, subsequently, interpolate \textit{all possible} frequent substrings at the remaining string lengths.
This has the benefit of reducing the sensitivity of the search algorithm, thereby reducing privacy noise.
However, this interpolation step also leads to at least a quadratic blow-up in runtime and space.

\subsection{Overview}

To overcome these challenges, 
we adopt a top-down search procedure that ventures only into regions of the subspace where substrings might plausibly be frequent.
However, compared to \citet{bernardini2025differentially},  
we are more prudent in our generation of candidates, focusing effort only where it matters. 

\subsubsection{Preprocessing}

We process the strings in $\DataSet$ using the binary alphabet.
Each symbol in $\Sigma$ is converted to a binary codeword followed by the terminal $\texttt{\$}\notin \Sigma$.
For $r=\lceil\log |\Sigma|\rceil+1$, 
this increases the string length to $\lengthprime = \length r  $.
The encoding is done with the codebook $E:\Sigma \to \{ x \texttt{\$} \mid x \in \{\texttt{0,1}\}^{r-1} \}$.
Assuming each character in $\Sigma$ can be stored in $O(1)$ words, the codebook can be stored in a direct-access array in $O(|\Sigma|) $ space. 
The codebook is extended to strings through concatenation.
That is, for a string $S = x_1 \cdots x_n,$ $E(S) \doteq E(x_1) \cdots E(x_n)$.

The terminal at the end of each encoded character is necessary to prevent the extraction of frequent substrings that are not character aligned.
For example, the substring `\texttt{AC}', for $\texttt{A},\texttt{C} \in \
\Sigma$, may occur frequently in $\DataSet$.
However, if the symbols are encoded as $E(\texttt{A})=\texttt{0000}$ and $E(\texttt{C})= \texttt{0101}$, we may incorrectly output `\texttt{0001}' if the boundary between \texttt{A} and \texttt{C} is not recognized. 

Further, when operating on the binary alphabet, we need to ensure that output substrings are well-defined.
That is, we want to avoid outputting frequent substrings that cannot be correctly decoded.
Continuing the example above, the substring `$\texttt{00\$0101\$}$' can be frequent in the encoded dataset, but it does not decode to a substring in $\Sigma^*$.
To enable correct outputs, we only generate \emph{character-aligned} substrings, which are defined as follows.

\begin{definition}[Character-aligned Strings]\label{def:char_aligned_substr}
Let $b \doteq \lceil\log|\Sigma|\rceil$ and $r \doteq b+1$. 
Fix an injective block encoding
$E:\Sigma\to\{x \texttt{\$} \mid x \in \{\texttt{0},\texttt{1}\}^b\}$ and extend it to strings by concatenation.
A length-$k$ string $P\in\{\texttt{0},\texttt{1},\texttt{\$}\}^k$ is \emph{character-aligned} if there exist integers $a\ge 0$ and $0\le c\le r$ such that $k=ar+c$ and {$P[1 \ldots ar]$ can be written as $B_1 \cdots B_a$ where each $B_i \in E(\Sigma)$.}
\end{definition}
Note that only the last block of a character-aligned substring can be incomplete.
For example, for $b=2$, $r=3$, and $E$ defined in the caption of Figure~\ref{fig:freq_substring_mining},
$P=\texttt{01\$10\$1}$ is character-aligned since there exist $a=2$ and $c=1$ such that $7= 2\cdot 3 + 1$ and $P[1 \ldots 6] = \texttt{01\$} \cdot \texttt{10\$}$.
The character-aligned substrings of a dataset have the following sensitivity.
\begin{lemma}
\label{lem:sens_bound_aligned}
Let $r \doteq \lceil \log |\Sigma | \rceil +1$. 
For each $0\le k\le \lengthprime=\length r$, let $\mathcal{A}_k$ be the family of length-$k$ \emph{character-aligned encoded substrings} $P\in\{\texttt{0},\texttt{1},\texttt{\$}\}^k$. 
Then the $L_1$-sensitivity of the vector $(\freq{\DataSet}{P})_{P\in \mathcal{A}_k}$ is bounded by $2\length$.
\end{lemma}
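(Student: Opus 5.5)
The argument is a direct counting of how many character-aligned occurrences a single user can contribute for a fixed length $k$. Work in the encoded dataset, where each user string is $E(\Document{i})$ of length at most $\length r$. Since $\freq{\DataSet}{P}=\sum_i \freq{E(\Document{i})}{P}$, changing only user $i$ (replacing $\Document{i}$ by $\Document{i}'$) changes the vector by
\[
\sum_{P\in\mathcal{A}_k}\bigl|\freq{E(\Document{i})}{P}-\freq{E(\Document{i}')}{P}\bigr|
\le \sum_{P\in\mathcal{A}_k}\freq{E(\Document{i})}{P}+\sum_{P\in\mathcal{A}_k}\freq{E(\Document{i}')}{P},
\]
by the triangle inequality. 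It therefore suffices to show that, for any single string $S\in\Sigma^{1:\length}$, the total number of occurrences in $E(S)$ of length-$k$ character-aligned strings is at most $\length$.

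\textbf{Key step: occurrences must start at block boundaries.} Fix $P\in\mathcal{A}_k$ with $k=ar+c$ as in Definition~\ref{def:char_aligned_substr}, and suppose $P$ occurs at position $j$ of $E(S)$. I claim $j\equiv 1 \pmod r$.
\begin{itemize}
\item If $a\ge 1$, then $P[1\ldots r]=B_1\in E(\Sigma)$ has exactly one $\texttt{\$}$, located at its last position $r$. In $E(S)$ the symbol $\texttt{\$}$ appears exactly at positions divisible by $r$. Hence $j+r-1\equiv 0\pmod r$, i.e.\ $j\equiv 1\pmod r$.
\item If $a=0$, the string $P$ is just a length-$c$ prefix of a block, and the alignment condition in the definition is vacuous. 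Read literally, the definition would then let such $P$ occur at arbitrary offsets.
\end{itemize}
I would handle the case $a=0$ by interpreting ``character-aligned encoded substring'' as counting only occurrences starting at block boundaries. This is consistent with the algorithm generating candidates only from aligned positions, and with the sparse suffix tree of Lemma~\ref{lem:r_spaced_SST_multi}, which stores only $r$-spaced suffixes. Alternatively, for $c<r$ with $a=0$ one can note that the frequency counted is the one over $r$-spaced suffixes.

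I expect making this $a=0$ (short-prefix) case rigorous to be the main obstacle. Without the interpretation above, a length-$1$ string such as $\texttt{0}$ could occur up to $\length b$ times in $E(S)$, and the bound would fail. So this convention has to be stated explicitly.

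\textbf{Counting and conclusion.} Given the claim, every counted occurrence starts at one of the positions $1, r+1, 2r+1,\dots$ of $E(S)$. There are at most $\card{S}\le\length$ such positions. Each position is the start of at most one length-$k$ substring, and hence of an occurrence of at most one $P\in\mathcal{A}_k$. Therefore
\[
\sum_{P\in\mathcal{A}_k}\freq{E(S)}{P}\le \length .
\]
Applying this bound to both $\Document{i}$ and $\Document{i}'$ in the triangle inequality above gives an $L_1$-sensitivity of at most $2\length$.
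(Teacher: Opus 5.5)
Your argument is correct and matches the paper's proof. Both bound each user's aligned occurrences by the at most $\length$ character-boundary start positions in $E(\Document)$, then double the bound to cover the replaced and the replacing string. Two of your additions are worth keeping:
\begin{itemize}
\item The \texttt{\$}-position argument, which shows that occurrences must start at positions $\equiv 1 \pmod r$; the paper only asserts this.
\item Your caveat about the case $a=0$, which is well taken. Since the definition also allows $c=r$, this case makes every string of length at most $r$ character-aligned.
\end{itemize}
The paper's proof tacitly adopts the same convention you propose, namely counting only block-aligned occurrences. That is how the $r$-spaced sparse suffix tree oracle counts, and the phase-1 vector over $\{\texttt{0},\texttt{1}\}^{r-1}\texttt{\$}$ satisfies it automatically.
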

\begin{proof}
Changing one record $\Document$ in $\DataSet$ can affect only those counts $\freq{\DataSet}{P}$ that come from substrings $P$ occurring in $\Document$ itself.
Under the binary encoding, the encoding string $E(\Document)$ has length $\length r$.
Character-aligned substrings may begin only at positions corresponding to original character boundaries, that is, at indices $1, r+1, 2r+1, \ldots, (\length-1)r -1$.
Therefore, any character-aligned string of length $k$ has at most $\lengthprime/r-k+1\le \length$ possible starting positions in $\Document$. 
Consequently, replacing $\Document$ can change at most $\length$ of the counts $\freq{\DataSet}{P}$ by $+1$, and at most $\length$ of them by $-1$, giving total $L_1$ change at most $2\length$.
\end{proof}

\subsubsection{Frequent Substring Mining}

The procedure unfolds over $\lceil \log \length \rceil $ phases and takes the threshold parameter $\freqLowerBound$ as input (see Definition~\ref{def:threshold}).  
Let $\Sigma_E  \doteq \{x \texttt{\$} \mid x \in \{\texttt{0},\texttt{1}\}^{r-1}\}$. 
In phase $i$, with $k = r\cdot 2^{i-1}$, the algorithm privately outputs the frequent-substring sets for strings in $\Sigma_E^{(k+1):2k} \equiv \Sigma^{(k+1):2k}$.
The process is inductive, extending existing frequent substrings of length $k$ to generate more candidates.
Because extensions are carried out bit by bit, the algorithm must also determine frequent substrings at intermediate lengths (that is, substrings of length $ar+c$ for $c <r$). 
This bitwise decomposition accelerates the pruning process.
If we were operating exclusively on the general alphabet $\Sigma$, each substring extension would need to consider $O(|\Sigma|)$ additional candidates.
Working with a binary alphabet, we only consider at most two additional candidates per extension, at the cost of $O(\log |\Sigma|)$ additional rounds.
Thus, probing frequencies at the bit level yields a more efficient extension procedure.

Beginning at the base phase, $i=1$, we construct the frequency vector 
$(\freq{\DataSet}{\gamma})_{\gamma \in \{\texttt{0},\texttt{1}\}^{r-1}\texttt{\$}}$.
By Lemma~\ref{lem:sens_bound_aligned}, this vector has sensitivity bounded by $2\length$.
Therefore, we can construct noisy counts $(\freqnoisy{\DataSet}{\gamma})_{\gamma \in \{\texttt{0},\texttt{1}\}^{r-1}\texttt{\$}}$ that, with high probability, satisfy $\varepsilon_0$-differential privacy with $\Tilde{O}(\ell/\varepsilon_0)$ additive error by Lemmas~\ref{lem:laplace-tail}~and~\ref{lem:lap_mech} . 
The set $\Candidates{r}$, of character-aligned substrings of length $r$, is then constructed by selecting substrings with noisy frequency above a threshold $\tau \in \Tilde{O}(\length/\varepsilon)$ to be defined later.

At each subsequent phase $i>1$, with $k=r\cdot 2^{i-1}$, we compute $\Candidates{k+1}, \ldots, \Candidates{2k}$ from the substrings in $\Candidates{k}$.
The process is built on the following key observation.

\begin{lemma}
    \label{lem:frequent_sub}
    Given a dataset $\DataSet$ over $\Sigma$, let $\Candidates{k}$ denote a set that includes all substrings of length $k$ that are $\freqUpperBound$-frequent.
    Let $\CandidateExtension{k,t}$ be the set consisting of all strings $\bar{s}$ of length $k + t$, for some $t \in [k]$, such that
    \begin{align*}
        &\text{(i)}\ \bar{s}[1 \ldots k] \in \Candidates{k}, \\   
        &\text{(ii)}\ \exists\, s' \in \Candidates{k} \text{ such that } \bar{s}[k + 1 \ldots k + t] \text{ is a suffix of } s'.
    \end{align*}
    Then for each $t \in [k]$, every $\freqUpperBound$-frequent substring of length $k + t$ is contained in $\CandidateExtension{k,t}$.
\end{lemma}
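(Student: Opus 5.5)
The plan rests on anti-monotonicity of frequency under taking substrings: if $Q$ is a substring of $P$, then every occurrence of $P$ in some $\Document{i}$ yields at least one occurrence of $Q$ in $\Document{i}$. Consequently $\freq{\DataSet}{Q} \ge \freq{\DataSet}{P}$, and any substring of a $\freqUpperBound$-frequent string is itself $\freqUpperBound$-frequent. I would state and prove this as a one-line auxiliary fact first. Concretely, fix the occurrence positions of $P$ in $\Document{i}$; the map sending an occurrence at position $j$ to position $j+o$, where $o$ is the offset of $Q$ inside $P$, is injective. Summing over $i$ gives the inequality.

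With that in hand, fix $t \in [k]$ and a $\freqUpperBound$-frequent string $\bar{s}$ of length $k+t$. I then check conditions (i) and (ii) of the definition of $\CandidateExtension{k,t}$ in turn.

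For (i), the prefix $\bar{s}[1 \ldots k]$ is a substring of $\bar{s}$ of length $k$. By the monotonicity fact it is $\freqUpperBound$-frequent, so by the hypothesis on $\Candidates{k}$ it lies in $\Candidates{k}$.

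For (ii), I would take $s' \doteq \bar{s}[t+1 \ldots k+t]$, the length-$k$ suffix of $\bar{s}$. This is well defined because $t \le k$ implies $t+1 \ge 1$ and the window has length exactly $k$. Again by monotonicity $s'$ is $\freqUpperBound$-frequent, hence $s' \in \Candidates{k}$. Moreover, $\bar{s}[k+1 \ldots k+t]$ consists of the last $t$ characters of $\bar{s}$, which are also the last $t$ characters of $s'$; since $t \le k = |s'|$, this string is a suffix of $s'$. Both conditions hold, so $\bar{s} \in \CandidateExtension{k,t}$.

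There is no real obstacle here. The only points needing care are the index bookkeeping (checking that $t \le k$ keeps the windows inside $\bar{s}$ and makes the last $t$ characters a genuine suffix of $s'$) and the injectivity in the monotonicity argument, which holds because occurrences are counted by starting position. If the statement is applied to the encoded alphabet with character-aligned candidate sets, the same argument carries over verbatim, provided $\Candidates{k}$ is assumed to contain all $\freqUpperBound$-frequent strings of length $k$ (within the relevant family). This is exactly the hypothesis of the lemma.
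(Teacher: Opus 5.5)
Your proof is correct and matches the paper's argument. Frequency is monotone under taking substrings, so both the length-$k$ prefix $\bar{s}[1\ldots k]$ and the length-$k$ suffix $\bar{s}[t+1\ldots k+t]$ are $\freqUpperBound$-frequent and hence lie in $\Candidates{k}$, and the last $t$ characters of $\bar{s}$ form a suffix of the latter. One caveat, about your closing aside only: if the ``relevant family'' is the block-aligned encoded strings that the algorithm actually places in $\Candidates{k}$, the argument does not carry over verbatim when $t$ is not a multiple of $r$, because the window $\bar{s}[t+1\ldots k+t]$ then begins mid-block and need not lie in $\Candidates{k}$.
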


\begin{proof}
    Assume a substring $\hat{s}$, with $|\hat{s}| = k +t$, is $\freqUpperBound$-frequent.
    It suffices to show that $\hat{s} \in \CandidateExtension{k,t}$.
    By Definition~\ref{def:threshold}, $\freq{\DataSet}{\hat{s}} \geq \freqUpperBound$.
    For (i), the prefix condition, consider the prefix $p = \hat{s}[1 \ldots k]$.
    For any occurrence of substring $\hat{s} = \Document[i \ldots i + k + t - 1]$, for $\Document \in \DataSet$, $p = \Document[i \ldots i + k - 1]$ occurs in the same position.
    Thus,
    $ 
        \freq{\DataSet}{p} \geq \freq{\DataSet}{\hat{s}}.
    $ 
    By assumption, every substring with frequency at least $\freqUpperBound$ is in $\Candidates{k}$.
    Therefore, $p \in \Candidates{k}$.
    For (ii), the suffix condition, let $q = \hat{s}[k + 1 \ldots k + t]$ denote the length-$t$ suffix of $\hat{s}$ and let $s^{\prime} = \hat{s}[t + 1 \ldots k + t]$ denote the length-$k$ suffix of $\hat{s}$.
    Since $\hat{s}$ is $\freqUpperBound$-frequent, every length-$k$ substring of $\hat{s}$ is also
    $\freqUpperBound$-frequent; hence $s' \in \Candidates{k}$.
    Moreover, $q$ is a suffix of $s'$.
    Consequently, $\hat{s} \in \CandidateExtension{k,t}$, as required.
    Therefore, $\hat{s}\in \CandidateExtension{k,t}$.
\end{proof}
\noindent 
Lemma~\ref{lem:frequent_sub} is visualized in Figure~\ref{fig:prefix-suffix-freq}.

\begin{figure}[t]
    \centering
    \includegraphics[width=0.3\linewidth]{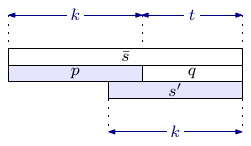}
    \caption{
        Illustration of Lemma~\ref{lem:frequent_sub}, with strings in $\Candidates{k}$ highlighted.
    }
    \label{fig:prefix-suffix-freq}
\end{figure}

\paragraph*{Candidate Extension.}
To facilitate candidate generation, we construct the tree $T_k$ of the character-aligned suffixes in $\Candidates{k}$.
As $T_k$ is computed through post-processing $\Candidates{k}$, it incurs no privacy cost.
By Lemma~\ref{lem:frequent_sub},
conditioned on $\Candidates{k}$ satisfying the Inclusion-Exclusion Criterion,
any $\freqUpperBound$-frequent character-aligned substring of length $k+t$ must:
\begin{enumerate}
    \item\label{item:freq_cond_1} begin with some $s \in \Candidates{k}$, and
    \item\label{item:freq_cond_2} continue with a substring $q$ that forms a length-$t$ path from the root of $T_k$.
\end{enumerate}
In other words, there exists $u\in T_k$, where $\str{u} = q$ and $|q|=t$.
Therefore, to enumerate all $\freqUpperBound$-frequent substrings of lengths $k+1, \ldots 2k$, it suffices to search over path concatenations of the form
\[
 s \circ \str{u}, \quad u \in T_k, \, s \in \Candidates{k}.
\] 

\begin{figure}[t]
    \centering
    \includegraphics[width=\linewidth]{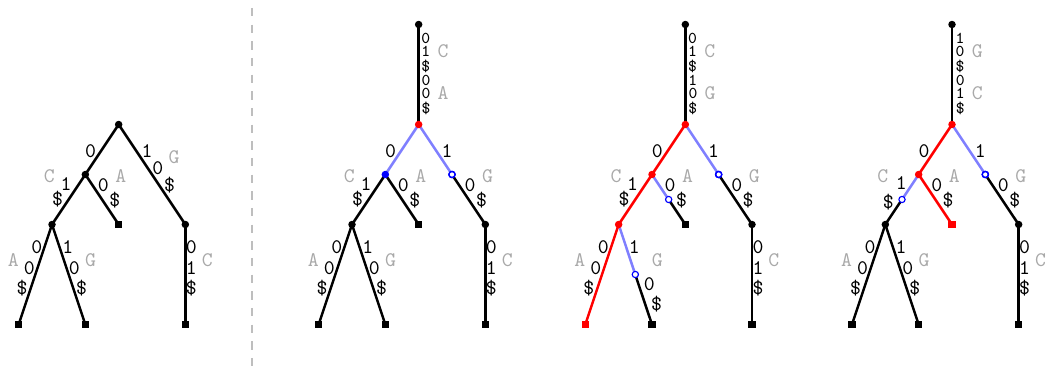}
    \caption{
    Example of a phase $i=2$ in \cref{alg:DPFS} with
    $\DataSet = \{ \texttt{CGCA}, \texttt{CGCA}, \texttt{CATA}\}$, 
    $\freqLowerBound= 2$, and
    block encoding $E$ specified by
    $\texttt{A} \mapsto \texttt{00\$}$,
    $\texttt{C} \mapsto \texttt{01\$}$,
    $\texttt{G} \mapsto \texttt{10\$}$, and
    $\texttt{T} \mapsto \texttt{11\$}$.
    Here, $r = 3$ and $k = 6$.
    \underline{Left}: $T_6$, the $r$-spaced sparse suffix tree of
    $\Candidates{6} = \{ \texttt{01\$00\$}, \texttt{01\$10\$}, \texttt{10\$01\$} \} \equiv \{ \texttt{CG}, \texttt{GC}, \texttt{CA} \}$ with offset $1$ and round $3$ (i.e., suffixes starting at positions 1 and 4).
    \underline{Right}: the traversal on $s \circ T_6$ for each $s \in \Candidates{6}$.
    Each search begins at the root of the subtree $T_6$.
    At each node $u \in s \circ T_6$  encountered during the traversal, a noisy frequency count 
    $\Tilde{f}=\freqnoisy{\DataSet}{\str{u}}$ is computed (see Section~\ref{sec:noisy_counts} for details). 
    If $\Tilde{f}\geq \freqLowerBound$, then $\str{u}$ is added to $\Candidates{6+t}$, where $6+t=|\str{u}| \in \{9, 12\}$ in this example. 
    Red nodes denote substrings with noisy frequency $\geq$ $\freqLowerBound$, i.e., substrings in 
    $\Candidates{9} = \{ \texttt{01\$10\$01\$}, \texttt{10\$01\$00\$} \} \equiv \{ \texttt{CGC}, \texttt{GCA} \}$ and
    $\Candidates{12} = \{ \texttt{01\$10\$01\$00\$} \} \equiv \{ \texttt{CGCA} \}$.
    Blue nodes (solid dots) and loci within edges (hollow dots) denote points where the search is pruned.
    }
    \label{fig:freq_substring_mining}
\end{figure}

\paragraph{Candidate Searching.}
We search each concatenated tree $s \circ T_k$, $\forall s \in \Candidates{k}$, where $s$ is viewed as a single-path tree whose last node, representing the final character of $s$, is connected to the root of $T_k$.
A search on $s \circ T_k$ begins at node $r$, where $\str{r} = s$ (the root of the subtree $T_k$). 
For each node $u \in s \circ T_k$ at depth $k+t$, we compute a noisy frequency count $\Tilde{f}=\freqnoisy{\DataSet}{\str{u}}$ for the substring represented by the node.
In Section~\ref{sec:noisy_counts}, we show how noisy counts are computed.
If the noisy count $\Tilde{f}$ exceeds threshold $\tau$, we add $\str{u}$ to $\Candidates{k+t}$.
Otherwise, if the noisy count does not exceed the threshold, we prune the search, by returning to node $v=\parent{u}$.
The threshold $\tau$ is chosen so that if $\str{u}$ is $\freqUpperBound$-frequent, then $\Tilde{f} \ge \tau$ with high probability;  
conversely, if $\str{u}$ is $\freqLowerBound$-infrequent, then $\Tilde{f} \ge \tau$ occurs with negligible probability.

Efficiency is provided by the pruning step.
Intuitively, for each $t \in [k]$, only $\freqLowerBound$-frequent strings of length $k + t$ can be added to $\Candidates{k + t}$, and there are at most $O(n \length / \freqLowerBound ) \subseteq O(n)$ such (character-aligned) strings, for $\freqLowerBound \geq \length $.
Therefore, with high probability, across all trees $\{s \circ T_k \mid s \in \Candidates{k}\}$, we only traverse ${O}(n)$ paths, each with length less than or equal to $k$.
In addition, we only construct $T_k$ once and reuse it for subsequent searches during the phase.

This completes the inductive step.
The process continues, eventually completing all phases $i\in \{1, \ldots, \lceil \log \length \rceil\}$.
An example of a phase for $i=2$ is given in Figure~\ref{fig:freq_substring_mining} and pseudocode for the full procedure is available in Algorithm~\ref{alg:DPFS}.

The output of the algorithm is the set of all \emph{well-defined} substrings 
\[
    \bigcup_{j \leq \lengthprime: j \equiv 0 \pmod r }\Candidates{j}.
\]
We can construct this set by building the trie of $\bigcup_{j \leq \lengthprime}\Candidates{j}$ in parallel to candidate generation and testing.
That is, when a new substring is mined, we add the corresponding bit to the output trie.
Leaf nodes that represent undefined substrings can be pruned in a single depth-first search at the conclusion of the algorithm.

\begin{remark}
    Lemma~\ref{lem:frequent_sub} admits a stronger---but also more costly---variant: one may replace condition (ii) by the apparently stronger condition
    \(
        \text{(iii)}\ \bar{s}[t+1\ldots k+t]\in\Candidates{k}.
    \)
    This is the observation exploited by \citet{bernardini2025differentially}.  
    However, enforcing (i) and (iii) simultaneously requires identifying, for different prefix choices, candidate extensions whose length-$t$ suffix itself belongs to $\Candidates{k}$; doing so leads to a blow-up in the number of pairwise checks and dramatically increases running time.
    
    By contrast, Lemma~\ref{lem:frequent_sub} permits an elegant construction of the sparse tree $T_k$ that can be reused for every $s\in\Candidates{k}$ when searching for extensions.  
    This reuse substantially reduces the running time and is a key distinction between our approach and that of \citet{bernardini2025differentially}.
\end{remark}

\subsection{Computing Noisy Frequency Counts}
\label{sec:noisy_counts}

\begin{algorithm}[t]
  \SetAlgoLined
  \DontPrintSemicolon
  \SetKwProg{myproc}{define}{}{}
  \SetKwInOut{require}{require}
  \SetKwInOut{define}{let}
  \require{strings stored and processed using the alphabet $\{\mathsf{0,1,}\$\}$}
    \myproc{$\algname(\DataSet, \eps, \beta, \freqLowerBound)$}
    {
        $r = \lceil\log |\Sigma|\rceil+1; \lengthprime = \length r$; \;
        $\varepsilon_0 =  \frac{ \eps/ \log \length }{4\length\log( n \lengthprime)}; $ \label{line:eps_set}\;
        $\tau \gets 4 \cdot  \varepsilon_0^{-1}\cdot\log (\lengthprime)\ln(n \lengthprime /\beta) + \freqLowerBound$ \;
        \For{$\gamma \in \{\texttt{0},\texttt{1}\}^{r-1}\texttt{\$}$\label{line:init_c1}}
        {
            \If{$\freq{\DataSet}{\gamma} + \laplacenoise(2 \length\eps^{-1} \log \length) > \tau$}
            {
                $\Candidates{r} \gets \Candidates{r}\cup \gamma$\;
            }
        }
        \For{$i \in \{1,\ldots, \lceil\log \length \rceil -1\}$}
        {
            $k \gets r\cdot 2^{i-1}$\; 
            $T_k \gets$ $r$-spaced sparse suffix tree of $\Candidates{k}$ with  round $r$ \; \label{line:tree-init}
            \If{$|\Candidates{k }| >  \frac{n}{\log \lengthprime }$}
            {
               terminate\;
            }
            \For{$s \in \Candidates{k}$}
            {
                Perform depth-first search on $s\circ T_k$ starting from $\trienode{s}$\;
                \ForEach{node $u$ encountered at depth $k+t$}{\label{line:depth}
                    $\Tilde{f} \gets \mathsf{NoisyCount}(s \circ T_k, u, \varepsilon_0)$\;
                    \If{$\Tilde{f} > \tau \land t < k+1$}{
                        $\Candidates{k+t} \gets \Candidates{k+t} \cup \{ \str{u} \}$\;
                        continue depth-first search into children of $u$\;
                    }
                    \Else{
                        prune search at node $u$\;
                    }
                }
            }
        }
       \KwRet $ \bigcup_{j \leq \lengthprime: j \equiv 0 \pmod r }\Candidates{j}$\;
    }
  \caption{Differentially Private Frequent Substring Mining }
  \label{alg:DPFS}
\end{algorithm}

The exact frequency oracle for substrings in $\DataSet$ is constructed with the $r$-spaced sparse suffix tree on the binary encoded string $E(\Document{1}) \#_1^r E(\Document{2}) \#_2^r\ldots E(\Document{n}) \#_1^r$.
Searching for $\freqUpperBound$-frequent items based on true frequencies is not permitted.
However, simply adding Laplace noise to each exact count leads to non-optimal additive error.
For example, at a given search depth $k+t$, by Lemma~\ref{lem:sens_bound_aligned}, the sensitivity of the exact frequency counts is $\mathcal{O}(\length)$.
By Lemma~\ref{lem:lap_mech}, we can assess these candidates while maintaining $\eps^*$-differential privacy and $\mathcal{O}(\length/\eps^*)$ additive error.
However, by Lemma~\ref{lem:composition}, even on a pruned search space, to achieve $\eps$-differential privacy across all phases we require $\eps^* = \eps/\length$ and obtain additive error $\mathcal{O}((\length)^2/\eps)$.

Instead, we adopt a method to estimate substring frequencies, first presented by~\citet{bernardini2025differentially},
which employs a clever use of the Binary Tree mechanism.
We now outline this method, and how the estimates can be computed on-the-fly during each traversal.

Before traversing the candidate trees 
\[
    \mathcal{T}_k = \{s \circ T_k \mid s \in \Candidates{k}\},
\]
we apply a heavy-light decomposition on the base tree $T_k$.
This decomposition depends solely on the topology of $T_k$.
Each heavy edge is chosen to point to the child whose subtree contains the largest number of nodes, with ties broken arbitrarily.
The same decomposition is \emph{reused} across each traversal in $\mathcal{T}_k$ and induces a corresponding decomposition on each candidate tree $s\circ T_k$, partitioning it into vertex-disjoint heavy paths.
By Lemma~\ref{lem:hl-decomposition}, any root-to-leaf path intersects at most $O(\log |T_k|) = O(\log(n\lengthprime))$ heavy paths.

Consider an arbitrary node $u^* \in s \circ T_k$.
Let the heavy path containing $u^*$ be 
\[
 P_{u^*}=(u_0,u_1,\ldots,u_j=u^*,\ldots,u_d).
\]
The frequency of $\str{u^*}$ can be expressed as 
\[
 \freq{\DataSet}{\str{u^*}}= \freq{\DataSet}{\str{u_0}}+\sum_{z=1}^{j} \bigl(\freq{\DataSet}{\str{u_z}}-\freq{\DataSet}{\str{u_{z-1}}})\bigr).
\]
Thus, to estimate  $\freq{\DataSet}{\str{u^*}}$, it suffices to release the DP prefix sums for
\[
(\freq{\DataSet}{\str{u_0}},\; \freq{\DataSet}{\str{u_1}}-\freq{\DataSet}{\str{u_0}},\; \ldots,\; \freq{\DataSet}{\str{u_j} }-\freq{\DataSet}{\str{u_{j-1}}}).
\]
This is precisely what the Binary Tree mechanism provides: a DP counting device for prefix sums.

Each heavy path in the decomposition maintains its own Binary Tree mechanism, named by the root node of the path.
With this structure, we compute the noisy counts on-the-fly, during each traversal.
When a node $u$ is encountered, if it is the root of a heavy path, we initialize a new Binary Tree mechanism with suitably calibrated noise\footnote{Laplace standard deviation $\sigma = \varepsilon_0^{-1} \log \lengthprime$, for $\varepsilon_0 =  \frac{ \eps/ \log \length }{4\length \log( n \lengthprime)}$, suffices.} and increment it with the value $\freq{\DataSet}{\str{u}}$.
Otherwise, if $v$ is the root of the heavy path that contains $u$, we increment the binary-tree mechanism initialized $v$ with the value
\[
     \freq{\DataSet}{\str{{u}}} - \freq{\DataSet}{\str{\parent{u}}} .
\]
This procedure is formalized in Algorithm~\ref{alg:noisy}.
The additive error and privacy provided are detailed in the subsequent section.

\begin{algorithm}[t]
  \SetAlgoLined
  \DontPrintSemicolon
  \SetKwProg{myproc}{define}{}{}
  \SetKwInOut{require}{require}
  \require{Heavy-light decomposition of $T$}
    \myproc{$\mathsf{NoisyCount}(T, u, \varepsilon_0)$}
    {
        $\sigma = \eps^{-1}_0 \cdot \log \lengthprime$\;
        $r \gets$ root of the heavy path that contains $u$\;
        \If{$u = r$}
        {
            Initialize Binary Tree Mechanism $\cB^{\sigma}_r$ with parameter $\sigma$ \tcp*{Lemma~\ref{lem:binary-tree}} 
            $f \gets \freq{\DataSet}{\str{u}}$ \;
        }
        \Else
        {
            $f \gets \freq{\DataSet}{\str{u}}- \freq{\DataSet}{\str{\parent{u}}}$\;
        }
        $\cB^{\sigma}_r.\mathsf{increment}(f)$\;
        \KwRet $\cB^{\sigma}_r.\mathsf{query}$\;
    }
  \caption{Computing noisy count during traversal of candidate tree $T$}
  \label{alg:noisy}
\end{algorithm}

\section{Analysis}
\label{sec:analysis}

Throughout this section we  assume that all input strings are from a binary alphabet.
In cases when the alphabet is not binary, all strings can be converted to binary, increasing the string length to $\lengthprime = \length r$, for $r = \lceil \log |\Sigma| \rceil+1$.

\subsection{Correctness}

\begin{lemma}\label{lem:correctness}
    For any parameters $\beta\in(0,1)$ and $\eps \in \R_{\geq 0}$, set 
    \[
    \eps_0 = \frac{ \eps/ \log \length }{4\length \log( n \lengthprime)}
    \quad \text{and} \quad    
    \tau^* = { \eps_0^{-1} \cdot  \paren{ \log \lengthprime } \cdot \ln(n\lengthprime /\beta)}
    \]
    Then, in Algorithm~\ref{alg:DPFS}, for thresholds
    \begin{align*}
        \freqUpperBound &= 9 \cdot \tau^*  
        \quad \text{and} \quad
        \freqLowerBound \geq \length {\log \lengthprime}
    \end{align*}
    the set $\mathcal{C} = \bigcup_{k=r}^{r \length } \Candidates{k}$ satisfies the $(\freqUpperBound, \freqLowerBound)$-Inclusion-Exclusion Criterion with probability at least $1-\beta$.
\end{lemma}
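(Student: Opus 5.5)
The plan is to condition on a single ``good'' event $\mathcal{E}$ on which every noisy count the algorithm ever computes is within $4\tau^*$ of its true value. On $\mathcal{E}$ the two halves of the Inclusion--Exclusion Criterion then follow deterministically, by comparing against $\tau = 4\tau^* + \freqLowerBound$ and inducting over the phases with Lemma~\ref{lem:frequent_sub}.

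\textbf{Step 1: the good event.}
\begin{itemize}
\item \emph{Base-phase counts.} Each of the at most $2^{r-1}$ Laplace draws in the first phase has scale $2\length\eps^{-1}\log\length \le \eps_0^{-1}$. By Lemma~\ref{lem:laplace-tail} with failure probability $\beta/(2^{r}n\lengthprime)$, each error is at most $\tau^*$.
\item \emph{Traversal counts.} Every other noisy count is a query to some Binary Tree mechanism $\cB^{\sigma}$ with $\sigma=\eps_0^{-1}\log\lengthprime$. Each mechanism lives on one heavy path, so it has length $d\le\lengthprime$.
\item \emph{Union bound.} I will argue that, before termination, the number of mechanisms and queries is at most $\mathrm{poly}(n\lengthprime)$. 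This holds because each phase has $|\Candidates{k}|\le n/\log\lengthprime$ (otherwise the algorithm stops), each traversal touches $O(n\lengthprime)$ nodes, and there are $\lceil\log\length\rceil$ phases. I then apply the utility part of Lemma~\ref{lem:binary-tree} with failure probability $\beta/\mathrm{poly}(n\lengthprime)$ to each query.
\item \emph{Size of the error.} Each traversal error is then $2\sqrt2\,\sigma\ln(\mathrm{poly}(n\lengthprime)/\beta)$. This is at most $4\tau^*$ once constants are absorbed, using $\ln(2d\cdot\mathrm{poly}/\beta)\le O(\ln(n\lengthprime/\beta))$.
\end{itemize}
Hence $\Pr[\mathcal{E}]\ge 1-\beta$. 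The bookkeeping of this union bound (making the constant come out to $4$) is routine but fiddly.

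\textbf{Step 2: exclusion.} On $\mathcal{E}$, a string enters some $\Candidates{k+t}$ only if its noisy count exceeds $\tau=4\tau^*+\freqLowerBound$. Its true count is therefore greater than $\freqLowerBound$. So no $\freqLowerBound$-infrequent string is ever added, in the base phase or in any traversal.

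\textbf{Step 3: no premature termination.} By exclusion, every member of $\Candidates{k}$ is a character-aligned string with frequency greater than $\freqLowerBound\ge\length\log\lengthprime$. By Lemma~\ref{lem:sens_bound_aligned}'s counting, the character-aligned length-$k$ occurrences number at most $n\length$. Hence $|\Candidates{k}|< n\length/\freqLowerBound\le n/\log\lengthprime$, and the termination test never fires on $\mathcal{E}$.

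\textbf{Step 4: inclusion, by induction on the phase.} Assume $\Candidates{k}$ contains all $\freqUpperBound$-frequent strings of length $k$; the base case is the first loop, with error at most $\tau^*$. Take a $\freqUpperBound$-frequent $\hat s$ of length $k+t$.
\begin{itemize}
\item By Lemma~\ref{lem:frequent_sub}, $\hat s=s\circ q$ with $s\in\Candidates{k}$ and $q$ a root path of $T_k$. So $\hat s$ is the label of a locus in $s\circ T_k$.
\item Every ancestor along the DFS path from $\trienode{s}$ to that locus is a prefix of $\hat s$. Each such prefix therefore has frequency at least $\freq{\DataSet}{\hat s}\ge\freqUpperBound$, and its noisy count is at least $\freqUpperBound-4\tau^*$.
\item Since no pruning happens while noisy counts exceed $\tau$, $\hat s$ is reached and added provided $\freqUpperBound-4\tau^*>\tau$.
\end{itemize}

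\textbf{Main obstacle.} The hardest step is the last inequality $\freqUpperBound-4\tau^*>\tau$, i.e.\ $9\tau^*>8\tau^*+\freqLowerBound$. This is exactly where the relationship between $\freqLowerBound$ and $\tau^*$ enters. I would resolve it using the standing requirement of Definition~\ref{def:candidate-property} that $\freqUpperBound\ge\freqLowerBound$, and I would check carefully whether the constant $9$ together with the per-query error bound yields the needed slack. If the error bound from Step 1 can be sharpened (e.g.\ to $2\sqrt2\,\tau^*$ plus lower-order terms), the slack improves accordingly. Step 1's constant accounting is thus exactly what this step must be reconciled with.
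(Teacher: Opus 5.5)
Your architecture is the same as the paper's: a good event on which every noisy count is within $4\tau^*$ of the truth, then exclusion, then inclusion by induction using Lemma~\ref{lem:frequent_sub} and the fact that frequencies only decrease along the DFS path. Your Step~3 usefully makes explicit the termination check that the paper leaves implicit. However, the proof does not close. The inequality you flag in Step~4 is a genuine gap, and the fix you propose does not work.

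\textbf{The inclusion inequality.} The requirement $\freqUpperBound \ge \freqLowerBound$ only gives $\freqLowerBound \le 9\tau^*$. Inclusion needs $9\tau^* - E > 4\tau^* + \freqLowerBound$, where $E$ is your error bound.
\begin{itemize}
\item With $E = 4\tau^*$ this is $\freqLowerBound < \tau^*$.
\item Even a sharpened $E \approx 2\sqrt{2}\,\tau^*$ still needs $\freqLowerBound < (5-2\sqrt{2})\tau^*$.
\end{itemize}
No improvement of $E$ removes the need for an \emph{upper} bound on $\freqLowerBound$ in terms of $\tau^*$. The hypothesis $\freqLowerBound \ge \length\log\lengthprime$ only bounds it from below. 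Indeed, as literally stated the lemma can fail for large $\freqLowerBound$: if $\freqLowerBound \ge \freqUpperBound$, a string of frequency exactly $\freqUpperBound$ must be both included and excluded. The paper's proof uses exactly $(9-4)\tau^* \ge 4\tau^* + \freqLowerBound$ (and $8\tau^* > \tau$ in the base case), i.e.\ $\freqLowerBound \le \tau^*$, without comment. The missing idea is to read this off the explicit form
\[
\tau^* \;=\; \length\log\lengthprime \cdot \frac{4\log\length\,\log(n\lengthprime)\,\ln(n\lengthprime/\beta)}{\eps}.
\]
With $\freqLowerBound = \length\log\lengthprime$ (the value used in Corollary~\ref{cor:bound_C_k}), you get $\freqLowerBound \le \tau^*$ whenever $\eps \le 4\log\length\,\log(n\lengthprime)\ln(n\lengthprime/\beta)$, for example for every $\eps \le 1$ once $\length \ge 2$. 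For larger $\freqLowerBound$ you must add $\freqLowerBound \le \tau^*$ as an explicit hypothesis.

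\textbf{The constant in Step~1.} Separately, this constant cannot be ``absorbed''. Since $\tau = 4\tau^* + \freqLowerBound$ is hard-wired, exclusion needs $E \le 4\tau^*$ exactly. Your global a priori count of queries is roughly $n^2\lengthprime$ by your own estimate. Combined with the prefactor $2\sqrt{2}$ and the $\ln(2d)$ term in Lemma~\ref{lem:binary-tree}, it gives a multiple of $\sigma\ln(n\lengthprime/\beta)$ that can exceed $4$ (e.g.\ for constant $\beta$), which would break exclusion. The paper avoids a global count by union-bounding inside the induction. At depth $k+t$ only the at most $2|\Candidates{k+t-1}| < 2n$ children of the current frontier are queried, and the frontier size is controlled by exclusion at the previous depth. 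So the logarithm is only $\ln(2n\log\length/\beta)$. The paper's own bookkeeping here is also loose: it drops the $2\sqrt{2}$ and $\ln(2d)$ factors and does not sum failure probabilities over depths. Getting exactly $4$ rigorously therefore needs either a tighter count or a larger multiple of $\tau^*$ in $\tau$.
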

\begin{proof}
    Let $\beta_0 = \beta/\log \length$.
    The proof proceeds by induction.
    The test threshold in Algorithm~\ref{alg:DPFS} is set at
    \[
        \tau = 4 \cdot \eps_0^{-1} \paren{ \log \lengthprime } \cdot  \ln(n \lengthprime/\beta) + \freqLowerBound = 4\cdot \tau^* + \freqLowerBound.
    \]
    In phase $i=1$, frequent substrings of length $r$ are discovered.
    For each substring $\gamma \in \{\texttt{0},\texttt{1}\}^{r-1}\texttt{\$}$, $\laplacenoise( 2\eps^{-1} \length\log \length)$ noise is added to the true frequency.
    Therefore, by Lemma~\ref{lem:laplace-tail}, with probability $1-\beta_0$, $\forall \gamma \in \in \{\texttt{0},\texttt{1}\}^{r-1}\texttt{\$}$ ,
    \[
        |\freqnoisy{\DataSet}{\gamma}-\freq{\DataSet}{\gamma}| < 2 \eps^{-1}\cdot \length \paren{ \log \length} \ln(2/\beta_0) < \eps_0^{-1}\ln(2 /\beta).
    \]
    This implies that (i) the minimum frequency of any length-$r$ substring assigned to $\Candidates{r}$ is
    \[
        \tau - \eps_0^{-1}\ln(2/\beta) > (4-1) \cdot \tau^* + \freqLowerBound > \freqLowerBound,
    \]
    and (ii) any substring with frequency greater than $\freqUpperBound$ has noisy estimate at least
    \[
        \freqUpperBound -   \eps_0^{-1}\ln(2/\beta) > (9 -1) \cdot \tau^* > \tau,
    \]
    Thus, the $(\freqUpperBound, \freqLowerBound)$-Inclusion-Exclusion Criterion applies to $\Candidates{r}$ for all substrings of length $r$.
    By construction, all substrings in $\Candidates{r}$ are character-aligned.
    This concludes the base case.

    For the inductive case, for any $j<k+t$, where $k = r \cdot 2^i$ and $t\in \{1,\ldots,k\}$, assume that the $(\freqUpperBound, \freqLowerBound)$-Inclusion-Exclusion Criterion applies to $\Candidates{j}$ for character-aligned substrings of length $j$.
    By Lemma~\ref{lem:frequent_sub}, any substring $\bar{s}$ of length $k+t$ with frequency at least $\freqUpperBound$ corresponds to a node $\bar{v}\in s\circ T_k$, for some $s \in \Candidates{k}$, such that $\str{\bar{v}} = \bar{s}$. 
    Moreover, as any prefix $p$ of $\bar{s}$ 
    has frequency $\freq{\DataSet}{p}\geq \freq{\DataSet}{\bar{s}}$, then, by the induction hypothesis, $s^*\in \Candidates{k+t-1}$, where ${s^*} = \str{\parent{\bar{v}}}$.
    Thus, $\bar{s}$ is observed as a candidate during the execution of Algorithm~\ref{alg:DPFS}.

    The noisy estimates of all candidates at depth $k+t$ are generated using the Binary Tree Mechanism, with Laplace standard deviation $\sigma =  \eps_0^{-1}  \log \lengthprime$, on a prefix sum of length at most $\lengthprime$.
    In addition, by the inductive hypothesis, as there are at most $n \length$ character-aligned substrings of length $k+t$ in $\DataSet$, it follows that $|\Candidates{k+t-1}| < n\length/\freqLowerBound< n$.    
    The number of candidate substrings of length $k+t$ is at most $2|\Candidates{k+t-1}|$, as each node that corresponds to a string in $\Candidates{k+t-1}$ has at most two children.
    Let $\mathcal{F} = \{ s \in \children{u} \mid \str{u} \in \Candidates{k+t-1} \}$.
    By Lemma~\ref{lem:binary-tree} and the union bound, $\forall s \in \mathcal{F}$, with probability $1-\beta_0$,
    \begin{align*}
        |\freqnoisy{\DataSet}{s} - \freq{\DataSet}{s}| 
        \leq \sigma \cdot \ln \frac{|\mathcal{F}|}{\beta_0} = \sigma \cdot \ln \frac{2| \Candidates{k+t-1}|}{\beta_0}  
        &\leq 4\cdot \eps_0^{-1}  \paren{ \log \lengthprime} \cdot   \ln(n \lengthprime/\beta) 
        = 4\tau^*
    \end{align*}
    This implies that (i) the minimum frequency of any string assigned to $\Candidates{k+t}$ is at least
    \begin{align*}
        \tau -  4 \cdot \tau^*
        &> (4-4) \cdot \tau^* 
        + \freqLowerBound
        = \freqLowerBound
    \end{align*}
    and (ii) any character with frequency greater than $\freqUpperBound$ has noisy estimate at least
    \begin{align*}
        \freqUpperBound -   4\cdot \tau^* 
        &= (9 -4) \tau^* \geq 4 \cdot \tau^* + \freqLowerBound
        = \tau.
    \end{align*}
    Thus, the $(\freqUpperBound, \freqLowerBound)$-Inclusion-Exclusion Criterion applies to $\Candidates{k+t}$ for all substrings of length $k+t$.
    This concludes the inductive case and the proof.
\end{proof}

As there are at most $n\length$ character-aligned substrings of length $k$, the following result is immediate.
\begin{corollary}
    \label{cor:bound_C_k}
    For parameters $ \beta \in(0,1)$ and  $\freqLowerBound= \length \cdot { \log \lengthprime } $,
    for all $k \in \{0, \ldots, \lengthprime \}$, 
    and with probability at least $1-\beta$,
    \[
        |\Candidates{k}| \le \frac{n}{ { \log \lengthprime }  }.
    \]
\end{corollary}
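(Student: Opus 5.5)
We condition on the event of Lemma~\ref{lem:correctness}, which holds with probability at least $1-\beta$ and on which every $\Candidates{k}$ satisfies the $(\freqUpperBound, \freqLowerBound)$-Inclusion-Exclusion Criterion. On this event, part (ii) of the criterion says every string placed in $\Candidates{k}$ has true frequency strictly greater than $\freqLowerBound$. The claim therefore reduces to a deterministic counting statement: the number of character-aligned strings of length $k$ with frequency exceeding $\freqLowerBound = \length \log \lengthprime$ is at most $n/\log \lengthprime$.

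For the counting step we use a double-counting argument over occurrences. Fix $k$. By the argument in the proof of Lemma~\ref{lem:sens_bound_aligned}, character-aligned substrings of $E(\Document{i})$ can start only at the $\length$ character boundaries. So for each user $i$, the number of pairs (aligned start position, length-$k$ string) is at most $\length$. Summing over users gives
\[
\sum_{P \in \mathcal{A}_k} \freq{\DataSet}{P} \le n\length .
\]
Each $P \in \Candidates{k}$ contributes more than $\freqLowerBound$ to this sum, and all terms are nonnegative. Hence $|\Candidates{k}| \cdot \freqLowerBound < n\length$, that is,
\[
|\Candidates{k}| < \frac{n\length}{\length \log \lengthprime} = \frac{n}{\log \lengthprime}.
\]
This is the bound in the statement. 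For $k=0$ or for lengths never generated, $\Candidates{k}$ is empty and the bound holds trivially.

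The main subtlety is making sure that only character-aligned strings are counted, so that the per-user occurrence budget is $\length$ rather than $\lengthprime$. Candidates at intermediate lengths $ar+c$ with $c<r$ are still character-aligned by Definition~\ref{def:char_aligned_substr}: their starting positions are block boundaries, so each user again contributes at most $\length$ start positions. We also need the Inclusion-Exclusion guarantee to hold uniformly across all $k$ on a single event. This is exactly what Lemma~\ref{lem:correctness} supplies through its union bound, so no extra probability loss is incurred.
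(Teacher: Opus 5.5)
Your proposal is correct and matches the paper's argument: on the event of Lemma~\ref{lem:correctness}, every member of $\Candidates{k}$ has frequency exceeding $\freqLowerBound$, and there are at most $n\length$ character-aligned length-$k$ occurrences in total, so $|\Candidates{k}| \le n\length/\freqLowerBound = n/\log\lengthprime$. Your double-counting spells out the paper's one-line justification, and you also handle two details the paper leaves implicit: aligned strings of length at least $r$ can start only at block boundaries, and lengths below $r$ are never generated.
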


\subsection{Privacy}\label{subsec:privacy-proof}

\begin{lemma}
    For $\eps \in \R_{\ge 0}$, Algorithm~\ref{alg:DPFS} is $\eps$-differentially private.
    \label{lem:dp_proof}
\end{lemma}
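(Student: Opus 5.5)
I would prove the lemma by viewing Algorithm~\ref{alg:DPFS} as an adaptive sequential composition of a base-phase Laplace mechanism and one mechanism per phase $i \in \{1,\ldots,\lceil\log\length\rceil-1\}$, and then applying Lemma~\ref{lem:composition}. Each phase receives budget $\eps/\log\length$. Within a phase, everything except the noisy counts depends only on earlier outputs, so it is post-processing. This covers the tree $T_k$, its heavy-light decomposition, the candidate trees $s\circ T_k$, the traversal order, and the termination test on $|\Candidates{k}|$. The prunings inside a phase also depend on earlier outputs of the same phase. I would handle that adaptivity by viewing each Binary Tree mechanism's noise as drawn up front over the full (unpruned) heavy path. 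Pruning then only reveals a prefix of that mechanism's output, and the privacy guarantee of Lemma~\ref{lem:binary-tree} covers the whole output vector. So the algorithm's output within a phase is a post-processing of the collection of all Binary Tree outputs over the fixed family of heavy paths in $\mathcal{T}_k$, with $\mathcal{T}_k$ fixed by conditioning on $\Candidates{k}$.

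\textbf{Base phase.} The vector $(\freq{\DataSet}{\gamma})_{\gamma}$ ranges over character-aligned strings of length $r$. By Lemma~\ref{lem:sens_bound_aligned} it has $L_1$-sensitivity at most $2\length$. The noise is $\laplacenoise(2\length\eps^{-1}\log\length)$, so Lemma~\ref{lem:lap_mech} gives $(\eps/\log\length)$-DP. Thresholding the noisy counts is post-processing.

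\textbf{Phase $i$, with $k = r2^{i-1}$.} Fix $\Candidates{k}$. Each heavy path $P$ in each $s\circ T_k$ gets an independent mechanism $\cB^\sigma_P$ on the difference vector $x^P(\DataSet)$. By Lemma~\ref{lem:binary-tree}, the joint output on neighbours $\DataSet\sim\DataSet'$ is $\eps_0\sum_P\|x^P(\DataSet)-x^P(\DataSet')\|_1$-indistinguishable. The key step is to bound this total by $4\length\log(n\lengthprime)$, so that the phase is $\eps_0\cdot 4\length\log(n\lengthprime)=\eps/\log\length$-DP.

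To get this bound, I would group coordinates by node. Node $u$ contributes $|\Delta f(\str{u})|$ when $u$ is the head of its heavy path, and $|\Delta f(\str{u})-\Delta f(\str{\parent{u}})|$ otherwise, where $\Delta f = \freq{\DataSet}{\cdot}-\freq{\DataSet'}{\cdot}$. I would then split $\Delta f$ into the contribution from occurrences removed from $\Document{i}$ and those added in $\Document{i}'$, and handle each separately with the triangle inequality. For a single occurrence at a character-aligned start position $p$ (in the $r$-spaced tree, leaves are indexed by aligned starts), its indicator is $1$ exactly on the root-to-locus path of that suffix within one candidate tree $s\circ T_k$. Here $s$ is the length-$k$ prefix at $p$, and that prefix is unique.

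Along that path, within one heavy path, the indicator is a prefix of ones followed by zeros. So the difference vector has $L_1$ mass at most $2$ per heavy path touched: one unit at the head and one at the drop-off. Note also that the path leaves the heavy path exactly where it follows a light edge. Lemma~\ref{lem:hl-decomposition} bounds the number of heavy paths touched by $\lceil\log|T_k|\rceil+1 \le \log(n\lengthprime)$ up to constants. Hence one occurrence contributes $O(\log(n\lengthprime))$. Summing over at most $\length$ aligned starts in $\Document{i}$ and $\length$ in $\Document{i}'$ gives roughly $4\length\log(n\lengthprime)$.

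\textbf{Main obstacle.} The main obstacle is getting the constant right and making this charging rigorous. Two points need care. First, the drop-off and head contributions must be counted so that each heavy path costs at most $2$ per occurrence. Second, the size of $T_k$ must be at most $n\lengthprime$. The termination check $|\Candidates{k}|\le n/\log\lengthprime$ ensures this, and it is data-independent given earlier outputs, so it costs no privacy.

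Finally, Lemma~\ref{lem:composition} over the base phase and the $\lceil\log\length\rceil-1$ subsequent phases gives total privacy loss $\lceil\log\length\rceil\cdot\eps/\log\length$. This matches $\eps$ up to rounding of the logarithm, which I would absorb by reading $\log\length$ as $\lceil\log\length\rceil$ in $\eps_0$.
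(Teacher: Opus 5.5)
Your proposal is correct and is essentially the paper's proof. Both use composition over the phases, the Laplace mechanism for the base phase, and one Binary Tree mechanism per heavy path, with total $L_1$ sensitivity bounded via monotonicity along heavy paths and the light-edge bound of Lemma~\ref{lem:hl-decomposition}; the paper only sums in the opposite order (Lemma~\ref{lem:heavy_path_sens} bounds each heavy path by twice the maximum head frequency, and Lemma~\ref{lem:affected-mass} charges each aligned occurrence to the heads it crosses). Your explicit handling of within-phase adaptivity, and your use of the termination check to bound $|\mathcal{C}_k|$ deterministically, are more careful than the paper, whose Lemma~\ref{lem:affected-mass} relies on the high-probability Corollary~\ref{cor:bound_C_k}, which is not admissible in a pure-DP argument.
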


\begin{proof}[Proof of Lemma~\ref{lem:dp_proof}]
    We prove that each phase of \cref{alg:DPFS} is $\eps / \log \length$-DP.  
    By applying the composition lemma (Lemma~\ref{lem:composition}) across the $\log \length$ phases, the overall algorithm is $\eps$-DP.

    \subsubsection*{Phase $1$ (Publishing $\Candidates{r}$)}
    The privacy guarantee follows directly from the sensitivity of $(\freq{\DataSet}{\gamma})_{\gamma \in \{\texttt{0},\texttt{1}\}^{r-1}\texttt{\$}}$, which is at most $2\length$ (Lemma~\ref{lem:sens_bound_aligned}), and from the privacy guarantee of the Laplace mechanism (Lemma~\ref{lem:lap_mech}).
    
    \subsubsection*{Phase $i$ (Publishing $\Candidates{k+1},\ldots,\Candidates{2k}$ for $k = r\cdot 2^{i-1}$)}
        Consider the collection $\HeavyPaths{\Candidates{k}}$ of all heavy paths in $s \circ T_k$, for all $s \in \Candidates{k}$.  
        For each heavy path $\vec{u}= (u_0, u_1, \ldots, u_d) \in \HeavyPaths{\Candidates{k}}$, define the frequency-difference vector
        \[
            \difffreq{\DataSet}{\vec{u}}
            = \Bigl(
                \freq{\DataSet}{\str{u_0}}, 
                \freq{\DataSet}{\str{u_1}} - \freq{\DataSet}{\str{u_0}},\;
                \ldots,\;
                \freq{\DataSet}{\str{u_d}} - \freq{\DataSet}{\str{u_{d-1}}}
            \Bigr).
        \]
        Denote $\cB$ the binary tree mechanism (Lemma~\ref{lem:binary-tree}), with parameter $\eps_0$.
        Let $\DataSet'$ be a neighboring dataset differing in exactly one user's string.
        Then, by Lemma~\ref{lem:binary-tree}, 
        \(
            \cB \paren{\difffreq{\DataSet}{\vec{u}}} 
        \)
        and
        \(
            \cB \paren{\difffreq{\DataSet'}{\vec{u}}}
        \)
        are $\paren{\eps_0 \cdot \norm{\difffreq{\DataSet}{\vec{u}} - \difffreq{\DataSet'}{\vec{u}}}_1}$-indistinguishable (Definition~\ref{def:indistinguishability}).
        By the composition lemma (Lemma~\ref{lem:composition}), the collection
        \[
            \set{\difffreq{\DataSet}{\vec{u}} : \vec{u} \in \HeavyPaths{\Candidates{k}}} 
            \quad\text{and}\quad 
            \set{\difffreq{\DataSet'}{\vec{u}} : \vec{u} \in \HeavyPaths{\Candidates{k}}} 
        \]
        are $\paren{\eps_0 \cdot \sum_{\vec{u} \in \HeavyPaths{\Candidates{k}}} \norm{\difffreq{\DataSet}{\vec{u}} - \difffreq{\DataSet'}{\vec{u}}}_1}$-indistinguishable.  
        If we show that
        \begin{equation}
            \label{eq:heavypath-sens-bound}
            \sum_{\vec{u} \in \HeavyPaths{\Candidates{k}}} \norm{\difffreq{\DataSet}{\vec{u}} - \difffreq{\DataSet'}{\vec{u}}}_1
            \le 4\length \log( n \lengthprime ),
        \end{equation}
        then setting $\eps_0 = \frac{\eps / \log \length}{4 \length \cdot \log( n \lengthprime )} $ (Algorithm~\ref{alg:DPFS} Line~\ref{line:eps_set}) ensures phase $i$ is $\eps / \log \length$-DP, as its output can be viewed as a post-processing of 
        $\set{\difffreq{\DataSet}{\vec{u}} : \vec{u} \in \HeavyPaths{\Candidates{k}}}$.
        It remains to prove \cref{eq:heavypath-sens-bound}. For this, we use the following two lemmas, whose proofs are deferred to Appendix~\ref{app:deferred-proofs}.

        \begin{lemma}[Sensitivity of Heavy-Path Frequency Differences]
            \label{lem:heavy_path_sens}
            Let $\vec{u} = (u_0, \ldots, u_d)$ be a heavy path in $\HeavyPaths{\Candidates{k}}$, and $\DataSet'$ be a neighboring dataset differing in exactly one user's string:
            $\DataSet' = \DataSet \setminus \{\Document\} \cup \{\Document'\}$, where
            $\Document, \Document' \in \{\mathsf{0,1}\}^{1:\lengthprime}$.   
            Then
            \[
                \norm{\difffreq{\DataSet}{\vec{u}} - \difffreq{\DataSet'}{\vec{u}}}_1 
                \le 2 \cdot \max\set{\freq{\Document}{\str{u_0}}, \freq{\Document'}{\str{u_0}}}.
            \]
        \end{lemma}

        \begin{lemma}[Affected Heavy-Path Mass]\label{lem:affected-mass}
        Let $k=r\cdot 2^{i-1}$ in phase $i>1$ and let $\HeavyPaths{\Candidates{k}}$ be the family of heavy paths induced on the candidate trees $\{s\circ T_k : s\in C_k\}$ by the heavy-light decomposition of $T_k$. For neighboring datasets $D' = D\setminus\{S\}\cup\{S'\}$ with $S,S'\in\{0,1\}^{1:\ell_{\mathrm{bit}}}$,
        \[
        \sum_{\vec{u}\in\HeavyPaths{\Candidates{k}}} \max\bigl\{ \freq{\Document}{\mathrm{str}(\mathsf{head}(\vec{u}))},\freq{\Document'}{\mathrm{str}(\mathsf{head}(\vec{u}))} \bigr\}
        \le 2\length\log(n\lengthprime),
        \]
        where $\mathsf{head}(\vec{u})$ is the head of $\vec{u}$. 
        \end{lemma}
        Combining Lemmas~\ref{lem:heavy_path_sens} \& \ref{lem:affected-mass}, we observe
        \begin{align*}
            \sum_{\vec{u}\in \HeavyPaths{\Candidates{k}}} \!\!\! \bigl\|  \difffreq{\DataSet}{\vec{u}}- \difffreq{\DataSet'}{\vec{u}}\bigr\|_1
            &\le \!\!\!\!\! \sum_{\vec{u}\in \HeavyPaths{\Candidates{k}}} \!\!\!\!\! 2 \cdot \max\bigl\{ \freq{\Document}{\mathrm{str}(\mathsf{head}(\vec{u}))},\freq{\Document'}{\mathrm{str}(\mathsf{head}(\vec{u}))} \bigr\} \! 
            \le 4\length\log(n\lengthprime).
        \end{align*}
        This completes the proof.
\end{proof}

\subsection{Performance}

\begin{lemma}\label{lem:performance}
Algorithm~\ref{alg:DPFS}  runs in $O(n \lengthprime + |\Sigma|)$ time and $O(n\length + |\Sigma|)$ space.
\end{lemma}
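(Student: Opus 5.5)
The plan is to account separately for preprocessing, the exact frequency oracle, and the per-phase search, and then sum over the $\lceil \log \length \rceil$ phases.

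\emph{Preprocessing and oracle.} Building the codebook $E$ as a direct-access array takes $O(|\Sigma|)$ time and space. Encoding all strings takes $O(n\lengthprime)$ time. We then construct the $r$-spaced sparse suffix tree of $E(\Document{1})\#_1^r\cdots E(\Document{n})\#_n^r$, with frequency counters at explicit nodes. The concatenated string has length $O(n\lengthprime)$, so by Lemma~\ref{lem:r_spaced_SST_multi} this costs $O(n\lengthprime)$ time and $O(n\lengthprime/r)=O(n\length)$ space.

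One point needs care. The encoded strings must not be stored explicitly, since that would already cost $O(n\lengthprime)$ space. I would instead generate bits on the fly from the original characters and $E$, so that the raw input costs only $O(n\length)$ words. Querying $\freq{\DataSet}{\str{u}}$ for a child of an already-located locus takes $O(1)$ time: step one bit down from the parent's locus and read the counter of the node below it. Only character-aligned strings are ever queried, and these correspond exactly to the $r$-spaced suffixes, so the sparse tree suffices. The base phase enumerates $2^{r-1}=O(|\Sigma|)$ codewords, each located in $O(r)$ time, for a total of $O(|\Sigma| r)\subseteq O(|\Sigma|+n\lengthprime)$. Alternatively, only codewords that occur need be considered, with the rest handled by noise.

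\emph{Per-phase cost.} The early-termination check in Algorithm~\ref{alg:DPFS} guarantees $|\Candidates{k}|\le n/\log\lengthprime$ whenever a phase runs. This holds deterministically, which is why running time does not depend on the probabilistic Corollary~\ref{cor:bound_C_k}. Building $T_k$, the $r$-spaced sparse suffix tree of $\Candidates{k}$, together with its heavy-light decomposition (Lemma~\ref{lem:hl-decomposition}), then costs $O(k|\Candidates{k}|)$ time and $O(|\Candidates{k}|\cdot k/r)$ space.

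The search cost is charged to the nodes visited. Every visited node is either accepted into some $\Candidates{k+t}$ or is one of at most two pruned children of an accepted node (the alphabet is binary). So the work is $O\bigl(\sum_{t\le k}|\Candidates{k+t}|\bigr)$ node visits, each costing $O(\log\lengthprime)$ for the binary tree mechanism update (Lemma~\ref{lem:binary-tree}). Each $|\Candidates{k+t}|$ is bounded by $n/\log\lengthprime$; for indices not checked explicitly, I would extend the termination test to every length, which is harmless. The phase therefore costs $O(k\cdot n/\log\lengthprime\cdot\log\lengthprime)=O(kn)$. Summing $k=r2^{i-1}$ over the phases gives a geometric series bounded by $O(n\lengthprime)$.

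Space per phase is $O(n k/(r\log\lengthprime))$ words for the candidate sets, up to $O(n\length)$ after summing. Candidates are stored as pointers (loci) into the oracle tree rather than as explicit strings, and completed phases are discarded except for the output trie, which holds at most $O(n\length)$ character-level nodes. Binary tree mechanisms take $O(\log\lengthprime)$ space each. Only those on the current DFS path need to be live at once, so this contributes $O(k\log\lengthprime)$, which is negligible.

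\emph{Main obstacle.} The hardest step is the space bound $O(n\length)$ rather than $O(n\lengthprime)$. It requires carefully justifying that the candidate sets, $T_k$, and the output never materialize bit-level strings. Each must instead be represented compactly: $T_k$ as an $r$-spaced sparse tree with $O(|\Candidates{k}|k/r)$ nodes, and strings as references into the oracle. A second subtle point is the $O(\log\lengthprime)$ per-node overhead of the binary tree mechanism. This is absorbed only because the termination threshold $n/\log\lengthprime$ cancels it exactly, so I would verify that this cancellation indeed yields $O(n\lengthprime)$ total time.
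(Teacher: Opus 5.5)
Your proposal is correct and follows essentially the paper's route: a sparse-suffix-tree oracle in $O(n\lengthprime)$ time and $O(n\length)$ space, then per-phase construction of $T_k$ with its heavy-light decomposition, then charging the pruned search to accepted candidates at $O(\log\lengthprime)$ per Binary Tree update to get $O(nk)$ per phase, and finally a geometric sum over phases. Your space bookkeeping is also more careful than the paper's, which does not account for the sets $\Candidates{k+t}$ or the output trie.

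Two differences are worth noting. First, you make the bound deterministic by extending the termination test to every length. That is harmless post-processing, but it changes Algorithm~\ref{alg:DPFS}, which tests only $|\Candidates{k}|$, and only after building $T_k$; the paper instead invokes Corollary~\ref{cor:bound_C_k}, so it implicitly proves the bound only on the probability-$(1-\beta)$ event. Second, the inclusion $O(|\Sigma| r)\subseteq O(|\Sigma|+n\lengthprime)$ fails when $|\Sigma|\gg n\length$, so in phase~1 you should use your alternative: locate only the codewords that occur, and spend $O(1)$ on each remaining noisy zero count.
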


\begin{proof}
    We decompose the total cost into three components:
    \begin{enumerate}[label=(\roman*)]
        \item\label{cost:gst}
        {constructing the $r$-spaced sparse suffix tree over $\DataSet$, with round $r=\lceil\log |\Sigma|\rceil+1$},
        for exact frequency counts, denoted $\suffixtreeof{\DataSet}$, and annotating each branching node $u$ with the number of leaves in the subtree rooted at $u$ (constructed and annotated once before all phases);
        \item\label{cost:candidate-tree} 
        constructing the tree of character-aligned suffixes in $\Candidates{k}$ (constructed once per phase), 
        denoted $T_k$, and computing its heavy-path decomposition;
        \item\label{cost:traversal}
        in each phase $i$, traversing the concatenated trees $s \circ \trieof{k}$ (for $s \in T_k$) while simultaneously maintaining the corresponding loci in $\suffixtreeof{\DataSet}$.
    \end{enumerate}
    
    \subsubsection*{Runtime}
    Let $E: \Sigma \to \{x \$ \mid x \in \{\texttt{0,1}\}^{r-1} \}$ denote the codebook for converting characters to binary strings.
    The codebook can be stored in $O(|\Sigma|)$ memory.
    For component~\ref{cost:gst}, we construct an $r$-spaced sparse suffix tree $\suffixtreeof{\DataSet}$ on the string $T = E(\Document{1}) \#_1^r \ldots E(\Document{n})\#_n^r $.
    We can construct $T$ in $O(n\length)$ time and space by storing each encoded character in $O(1)$ words.
    By Lemma~\ref{lem:r_spaced_SST_multi},
    constructing $\suffixtreeof{\DataSet}$ takes
    \[
        O \bigparen{
            \sum_{\Document\in\DataSet}|\Document|
        }
        = O( n\lengthprime  )
    \]
    time, and produces a tree with
    \begin{align}
        O \bigparen{
            \sum_{\Document\in\DataSet}|\Document| /r
        }
        = O( n\lengthprime /r ) = O(n \length)
        \label{eqn:nodes_SST}
    \end{align}
    nodes.
    Annotating branching nodes takes a single $O(n \length)$ traversal.
    
    For component~\ref{cost:candidate-tree}, fix phase $i$ and let $k=r \cdot 2^{i-1}$. 
    Let $\Candidates{k} = \{s_1, \ldots, s_{|\Candidates{k}}|\}$.
    Similar to the frequency oracle, $T_k$, the tree of character-aligned tree of suffixes in $\Candidates{k}$, can be implemented by constructing the $r$-spaced sparse suffix tree over the string $s_1 \#_1^r s_2 \#_2^r \ldots s_{|\Candidates{k}|} \#_{|\Candidates{k}}^r$.
    Due to the choice of round, this structure supports a depth-first search on character-aligned suffixes provided we prune at the substring delimiters $\#_i$.
    Therefore, the cost of constructing $T_k$, by Lemma~\ref{lem:r_spaced_SST_multi}, is bounded by 
    \[
        O \PAREN{ \sum_{{s} \in \Candidates{k}} |{s}| }
         = O(|\Candidates{k}| k) 
         = O \PAREN{
            \frac{n \cdot k }{ { \log \lengthprime }  }.
         }
    \]
    by Corollary~\ref{cor:bound_C_k}.
    Summing over all phases yields time complexity
    \[
        O \PAREN{
            \frac{n }{  \paren{ \log \lengthprime }  } 
            \sum_{i=0}^{\left\lceil \log \length \right\rceil} r \cdot 2^{i}
        } 
        = O \PAREN{
            \frac{n \lengthprime }{ { \log \lengthprime }  } 
        }.
    \]  
    
    The heavy-path decomposition of $T_k$ is computed with a single depth-first search, with constant work per node.
    Note that $T_k$ is implemented with compressed (non-branching) paths.
    By definition, each (virtual) edge on this path is heavy, as nodes branching off the path are empty subtrees. 
    Therefore, the cost of computing the heavy-path decomposition is proportional to the number of nodes in the $r$-spaced sparse suffix tree over $\Candidates{k}$.
    Summed across all phases, this takes $O(n \length)$ time.

    For component \ref{cost:traversal}, at phase $1$ we do $|\Sigma|$ constant time noisy frequency queries.
    At phase $i>1$ (with $k=r \cdot 2^{i-1}$) we perform a depth-first search over concatenated trees of the form $s \circ T_k$, where $s \in T_k$.
    The search is pruned as soon as either the substring length exceeds $2k$, or the noisy frequency drops below the test threshold $\tau$.
    Traversal proceeds in the $r$-spaced sparse suffix tree of $\Candidates{k}$.
    At each step, we either move to an actual node in the tree or to a virtual node, with a locus inside a compressed branch.
    Significantly, nodes along a compressed branch can be revealed one-at-a-time, without unfolding the whole branch (which could represent a substring of length $O(k)$).
    Therefore, traversing a branch in $T_k$ takes constant time per revealed character.
    
    At each visited locus, we compute a noisy frequency count for the threshold test.
    Noisy counts are produced online using the Binary Tree mechanism on heavy paths (Algorithm~\ref{alg:noisy}) and require exact substring frequencies for computing the prefix sums.
    To obtain exact frequencies, the depth-first search in $s \circ T_k$ is mirrored in $\suffixtreeof{\DataSet}$.
    We store for each $s \in \Candidates{k}$ a pointer to a node $u_s\in\suffixtreeof{\DataSet}$, where $\str{u_s} = s$, and then resume traversal from $u_s$.
    Each step in $\suffixtreeof{\DataSet}$ takes constant time.
    When traversing a compressed edge in $\suffixtreeof{\DataSet}$ (which could represent a substring of length $O(\lengthprime)$) characters and their frequencies can be revealed incrementally in $O(1)$ time per step.  
    Thus, the cost of producing each exact frequency is $O(1)$.
    
    Each node touched during the traversal of $s\circ T_k$ forms a node in a Binary Tree mechanism.
    With exact frequency information available in constant time, updating and querying each node in each Binary Tree mechanism takes $O(\log d)$ time by Lemma~\ref{lem:binary-tree}, where $d\leq k$ is the length of the corresponding heavy-path.
    As each $\bar{s}\in \Candidates{k+t}$ produces at most two additional candidate nodes and frequency estimates, the total cost of all traversals at phase $i$ is 
    \begin{equation*}
        \label{eq:time-of-the-other}
        \sum_{t=1}^k 2 \cdot \card{\Candidates{k+t}}  \cdot O(\log k) 
        = O \PAREN{
             \frac{n \cdot k \cdot \log k}{ { \log \lengthprime }  } 
        } 
        = O ({n \cdot k})
    \end{equation*}
    time.
    Summing over $\log \length$ rounds the cumulative runtime for all phases $i>1$ is $O(n \lengthprime)$, which is proportional to the time cost for constructing $\suffixtreeof{\DataSet}$. 
    Adding the $O(|\Sigma|)$ cost of phase $1$ gives the stated bound.

    \subsubsection*{Memory}
    For the memory cost, we store $\suffixtreeof{\DataSet}$ throughout algorithm execution and, during phase $i$, for $k=r \cdot 2^i$, we store $T_k$ and a pointer for each string in $\Candidates{k}$.
    By Equation~\ref{eqn:nodes_SST}, the memory required to store $\suffixtreeof{\DataSet}$ is $O(n\length)$.
    The memory required to construct and store $T_k$, by Lemma~\ref{lem:r_spaced_SST_multi} and  Corollary~\ref{cor:bound_C_k}, is bounded by 
    \[
        O \PAREN{ \sum_{{s} \in \Candidates{k}} |{s}| /r}
         = O(|\Candidates{k}| k /r) 
         = O \PAREN{
            \frac{n \cdot k }{ r \cdot { \log \lengthprime }  }
         }
         = O \PAREN{
            \frac{n \cdot \lengthprime }{ r \cdot { \log \lengthprime }  }
         }
         = O \PAREN{
            \frac{n \cdot \length }{  { \log \lengthprime }  }
         }.
    \]
    $T_k$ can be deleted at the end of the phase.
    
    During a traversal of concatenated tree $s \circ T_k$, we can delete a Binary Tree mechanism once we leave the path it belongs to.
    Thus, including the cost of storing the codebook, the space required to execute the algorithm is $O(n \length + |\Sigma|)$. 
\end{proof}

\section{Conclusion}\label{sec:conclusion}

In this work, we studied the problem of differentially private frequent-substring mining and presented a new $\varepsilon$-differentially private algorithm that matches the near-optimal error guarantees of \citet{bernardini2025differentially} while reducing both space complexity from $O(n^2 \length^4)$ to $O(n \length + |\Sigma| )$ and time complexity from $O(n^2 \length^4 + |\Sigma|)$ to $O(n \lengthprime + |\Sigma|)$.
Our refined candidate-generation strategy and frequency-guided pruning rules remove the quadratic blow-ups of prior work, enabling scalable frequent substring mining under differential privacy.
Future work includes extending these techniques to richer pattern-mining tasks and evaluating their empirical performance on large-scale datasets.

\newpage
\bibliographystyle{IEEEtranSN}
\bibliography{reference}


\newpage
\appendix

\section{Deferred Proofs from Section~\ref{subsec:privacy-proof}}\label{app:deferred-proofs}

\begin{proof}[Proof of Lemma~\ref{lem:heavy_path_sens}]
    By definition, for each string $P$, $\freq{\DataSet}{P} = \sum_{\Document^* \in \DataSet} \freq{\Document^*}{P}$.
    Therefore, when $\DataSet' = \DataSet \setminus \{\Document\} \cup \{\Document'\}$,
    \begin{align*}
        \bigl\|
            \difffreq{\DataSet}{\vec{u}}
            - \difffreq{\DataSet'}{\vec{u}}
        \bigr\|_1 
        &= \Bigl|
            \freq{\DataSet}{\str{u_0}} 
            - 
            \freq{\DataSet'}{\str{u_0}} 
        \Bigr|\\
        &+ \! \sum_{j \in [d]}
            \Bigl|
                \bigl(
                    \freq{\DataSet}{\str{u_j}} - \freq{\DataSet}{\str{u_{j-1}}}
                \bigr)
                - 
                \bigl(
                    \freq{\DataSet'}{\str{u_j}} - \freq{\DataSet'}{\str{u_{j-1}}}
                \bigr)
            \Bigr| \\
        &= \Bigl|
            \freq{\Document}{\str{u_0}} 
            - 
            \freq{\Document'}{\str{u_0}} 
        \Bigr|\\
        &+ \! \sum_{j \in [d]}
            \Bigl|
                \freq{\Document}{\str{u_j}} - \freq{\Document}{\str{u_{j-1}}}
                -
                \bigl(
                    \freq{\Document'}{\str{u_j}}
                    - \freq{\Document'}{\str{u_{j-1}}}
                \bigr)
            \Bigr|.
    \end{align*}
    Since each vector 
    $\difffreq{\Document}{\vec{u}}$ and 
    $\difffreq{\Document'}{\vec{u}}$
    has non-positive entries (frequencies only decrease along a heavy path), the triangle inequality gives:
    \begin{align*}
        &\sum_{j \in [d]}
        \Bigl|
            \freq{\Document}{\str{u_j}} - \freq{\Document}{\str{u_{j-1}}}
            -
            \bigl(
                \freq{\Document'}{\str{u_j}}
                - \freq{\Document'}{\str{u_{j-1}}}
            \bigr)
        \Bigr| \\
        &\le
        \sum_{j \in [d]}
            \bigl(\freq{\Document}{\str{u_{j-1}}} - \freq{\Document}{\str{u_j}}\bigr)
        + \sum_{j \in [d]}
            \bigl(\freq{\Document'}{\str{u_{j-1}}} - \freq{\Document'}{\str{u_j}}\bigr) \\
        &\le
        \freq{\Document}{\str{u_0}}
        + \freq{\Document'}{\str{u_0}}.
    \end{align*}
    Therefore, 
    \begin{align*}
        \bigl\|
            \difffreq{\DataSet}{\vec{u}}
            - \difffreq{\DataSet'}{\vec{u}}
        \bigr\|_1 
        &= \Bigl|
            \freq{\DataSet}{\str{u_0}} 
            - 
            \freq{\DataSet'}{\str{u_0}} 
        \Bigr|
        + \freq{\Document}{\str{u_0}}
        + \freq{\Document'}{\str{u_0}} \\
        &\le 2 \cdot \max\set{\freq{\Document}{\str{u_0}}, \freq{\Document'}{\str{u_0}}}.
        \qedhere
    \end{align*}
\end{proof}

\begin{proof}[Proof of Lemma~\ref{lem:affected-mass}]
Fix $\Document$.
Each character-aligned start of a length-$k$ substring in $\Document$ (there are $\le \length$) induces at most one traversal in some $s\circ T_k$. 
By Lemma~\ref{lem:hl-decomposition}, a heavy-light decomposition guarantees that any root-to-leaf traversal crosses $\lceil\log|T_k|\rceil$ heavy paths.
Charge one unit to the head of each crossed path. 
By Corollary~\ref{cor:bound_C_k}, summing over all paths we obtain
\begin{align*}
    \sum_{\vec{u}\in\HeavyPaths{\Candidates{k}}} \freq{\Document}{\str{\mathsf{head}(\vec{u})} }
    &\leq \sum_{s \in \suffix{\Document}} \sum_{\vec{u}\in\HeavyPaths{\Candidates{k}}} \mathds{1}[\str{\mathsf{head}(\vec{u})} \preceq s] \\
    &\leq \length \lceil \log |T_k | \rceil \\
    &\leq 2 \length \log(n \lengthprime).
    \qedhere
\end{align*}
\end{proof}

\clearpage

\section{Related Work}
\label{sec:related-work}

While a rich body of literature exists on differentially private pattern mining, prior approaches fundamentally struggle with utility degradation due to privacy budget splitting or lack formal asymptotic utility guarantees.

\subsubsection*{\bf Contiguous Substring and $q$-gram Mining}
\citet{DBLP:conf/nips/KimGKY21} extract frequent $q$-grams via a top-down search up to length $q$. 
In the notation of our paper, their algorithm can be interpreted as a top-down search for frequent substrings up to length $q$. 
After identifying a candidate set $\Candidates{k}$ of potentially frequent substrings of length $k$ for some $k < q$, they construct $\Candidates{k+1}$ by exploring and pruning substrings in
\(
    \bigl( \Candidates{k} \times \Candidates{1} \bigr)
    \;\cap\;
    \bigl( \Candidates{1} \times \Candidates{k} \bigr).
\)
This requires splitting the privacy budget across $q$ search phases, which in turn degrades the utility guarantees.

Similarly, \citet{DBLP:conf/ccs/ChenAC12} studied the problem of finding frequent $q$-grams for $q=1,\dots,q_{\max}$.  
They adopt a top-down search on the tree whose domain is all possible $q_{\max}$-grams and split the privacy budget across tree heights, using heuristics to adaptively allocate budget among levels.  
Their paper proposes practical allocation strategies, but it does not provide a formal utility analysis of the resulting method.

\citet{DBLP:conf/securecomm/KhatriDH19} studied the private substring-counting problem in genomic data. 
They constructed a reverse tree over all suffixes of the input strings, where each edge is labeled by a symbol from the alphabet, each root-to-node path corresponds to the reverse of some suffix appearing in the dataset, and each node stores the number of occurrences of that suffix. 
They then applied the Laplace mechanism to privatize the node counts in this reverse tree.
Using the privatized counts, they subsequently built a suffix tree in a postprocessing step. 
However, the utility guarantees for answering substring-counting queries on the resulting suffix tree were not explicitly analyzed.

\subsubsection*{\bf Prefix and Trajectory Mining.}
\citet{DBLP:conf/kdd/ChenFDS12} studied frequent sequential pattern mining of public transit data in the Montreal area, representing each passenger's trip as a time-ordered sequence of stations. 
Their method constructs a prefix tree over all transit sequences in a differentially private manner, capturing subsequences that appear frequently as prefixes in the dataset. However, frequent subsequences that do not align with prefixes may not be accurately captured by this approach.

\citet{DBLP:conf/sigmod/ZhangXX16} presented an adaptive top-down approach for tree construction.
Their algorithm was originally designed for spatial data and for quadtree construction.
We adapt their framework to the string setting and briefly summarize the algorithm here.
When the input dataset consists of strings (where two datasets are neighboring if they differ in exactly one string), their algorithm constructs a prefix tree over the strings by retaining only nodes whose frequencies exceed a chosen threshold.
A key insight is that, when the goal is to privately publish only the topology of the prefix tree, their method avoids splitting the privacy budget across the tree height.
However, when prefix-node counts must also be released, their method does not include a formal utility analysis. 

\citet{DBLP:conf/cikm/BonomiX13} studied the private top-$k$ frequent pattern mining problem, aiming to output both frequent prefixes and frequent substrings. Their method adopts a two-phase approach: in the first phase, they construct a differentially private prefix tree to identify frequent prefixes, which then serve as candidates for frequent substrings; in the second phase, they apply fingerprinting techniques to transform and refine these candidates into the final outputs. The first phase requires splitting the privacy budget across different tree levels, and they introduce heuristics to improve the allocation. However, the paper does not provide a formal utility analysis of their method.

\subsubsection*{\bf Non-Contiguous Sequential Patterns.}
\citet{DBLP:journals/tkde/XuCSX016} studied the problem of privately identifying frequent subsequences over a dataset of strings.  
In contrast to substrings, subsequences do not need to be consecutive.  
Their algorithm also follows a top--down search strategy, exploring candidate frequent subsequences in increasing order of length.  
To reduce the search space, they use a subsampled subset of the dataset at each step. However, the paper does not provide an asymptotic utility analysis.

\citet{app12042131} presented a graph-based differentially private method for publishing frequent sequential patterns.
Given a set of frequent patterns obtained in a preprocessing step, they construct a bipartite user–pattern graph and apply an edge-differentially-private noise-graph mechanism, 
then recompute pattern supports from the protected graph to release the patterns with differential privacy guarantees, all without further access to users’ sequences.
They evaluate utility using several empirical metrics and report improvements over prior work; 
however, as with the aforementioned methods, they do not provide a formal asymptotic utility guarantee.

\hfill

\end{document}